\newif\ifcomments
\definecolor{ForestGreen}{rgb}{0.1333,0.5451,0.1333}
\definecolor{DarkRed}{rgb}{0.65,0,0}
\definecolor{Red}{rgb}{1,0,0}
\declaretheorem[numberwithin=section]{theorem}
\declaretheorem[numberlike=theorem]{lemma}
\declaretheorem[numberlike=theorem]{claim}
\declaretheorem[numberlike=theorem,style=definition]{definition}
\declaretheorem[numberlike=theorem,style=definition]{remark}
\global\long\def\polylog{\mathrm{polylog}}
\global\long\def\poly{\mathrm{poly}}
\global\long\def\H{\mathcal{H}}
\global\long\def\vol{\mathrm{vol}}
\global\long\def\route{\mathrm{route}}
\newcommand{\ignore}[1]{}
\def\thatchaphol#1{\marginpar{$\leftarrow$\fbox{TS}}\footnote{$\Rightarrow$~{\sf\textcolor{purple}{#1 --Thatchaphol}}}}
\def\benyu#1{\marginpar{$\leftarrow$\fbox{BW}}\footnote{$\Rightarrow$~{\sf\textcolor{red}{#1 --Benyu}}}}
\def\yaowei#1{\marginpar{$\leftarrow$\fbox{YL}}\footnote{$\Rightarrow$~{\sf\textcolor{blue}{#1 --Yaowei}}}}
\newcommand{\benyu}[1]{}
\newcommand{\yaowei}[1]{}
\newcommand{\thatchaphol}[1]{}
\begin{document}

\title{Approximating Directed Minimum Cut and Arborescence Packing via Directed Expander Hierarchies}
\date{\today}
\author{
Yonggang Jiang\thanks{MPI-INF and Saarland University, \texttt{yjiang@mpi-inf.mpg.de}.}
\and
Yaowei Long\thanks{University of Michigan, \texttt{yaoweil@umich.edu}.}
\and
Thatchaphol Saranurak\thanks{
        University of Michigan,
        \texttt{thsa@umich.edu}.
        Supported by NSF Grant CCF-2238138.}
\and
Benyu Wang\thanks{
University of Michigan, \texttt{benyuw@umich.edu}.
}
}
\maketitle

\begin{abstract}
   We give almost-linear-time algorithms for approximating rooted minimum cut and maximum arborescence packing in directed graphs, two problems that are dual to each other~\cite{edmonds1973edge}. More specifically, for an $n$-vertex, $m$-edge directed graph $G$ whose $s$-rooted minimum cut value is $k$,
\begin{enumerate}
\item Our first algorithm computes an $s$-rooted cut of size at most $O(k\log^{5} n)$ in $m^{1+o(1)}$ time, and
\item Our second algorithm packs $k$ $s$-rooted arborescences with $n^{o(1)}$ congestion in $m^{1+o(1)}$ time, certifying that the $s$-rooted minimum cut is at least $k / n^{o(1)}$.
\end{enumerate}
Our first algorithm works for weighted graphs as well.

Prior to our work, the fastest algorithms for computing the $s$-rooted minimum cut were exact but required super-linear running time: either $\tilde{O}(mk)$~\cite{gabow1991matroid} or $\tilde{O}(m^{1+o(1)}\min\{\sqrt{n},n/m^{1/3}\})$ \cite{cen2022minimum}. The fastest known algorithms for packing $s$-rooted arborescences had no congestion but required $\tilde{O}(m \cdot \mathrm{poly}(k))$ time~\cite{bhalgat2008fast}. 

\end{abstract}

\section{Introduction}
The (global) \emph{minimum cut} problem is very well-studied in graph algorithms. In the context of directed graphs, the goal is to find a cut $(S,V-S)$ in the graph $G=(V,E)$ such that the number of edges from $S$ to $V-S$ is minimized. Given an arbitrary source $s \in V$, the \emph{$s$-rooted minimum cut} is a cut $(S,V\setminus S)$ of minimum size with $s \in S$. The size of the $s$-rooted minimum cut is called $s$-rooted connectivity.
Observe that the global minimum cut problem can be solved by running two $s$-rooted minimum cut computations, one on the original graph and one on the graph with all edges reversed.

There is a line of work on solving the \emph{$s$-rooted minimum cut} problem.
Hao and Orlin \cite{hao1994faster} showed that a $\tilde{O}(mn)$-time algorithm based on push-relabel methods  \cite{goldberg1988new}. Then, Gabow \cite{gabow1991matroid} gave an algorithm with $\tilde{O}(mk)$ time where $k$ is the size of the $s$-rooted minimum cut.
A recent work \cite{cen2022minimum} solves the problem in $\min(n/m^{1/3},\sqrt{n}) \cdot m^{1+o(1)}$ time.

A closely related object to $s$-rooted mincut is an \emph{$s$-rooted arborescence}.
An $s$-rooted arborescence is a directed tree rooted at $s$ where all arrows point away from the root $s$. 
A very influential result by Edmonds \cite{edmonds1973edge} shows that the maximum number of disjoint $s$-rooted arborescences equals the size of $s$-rooted minimum cut.

In the \emph{arborescence packing} problem, we need to compute the maximum number of disjoint arborescences rooted at $s$. Gabow \cite{gabow1991matroid} an $O(n^2k^2)$-time algorithm. Later, Bhalgat et al.~\cite{bhalgat2008fast} improved the running time for arborescence packing to $\tilde{O}(m \poly(k))$. In weighted graphs, $k$ can be very large. In this case, the fastest algorithm is by Gabow and Manu \cite{gabow1998packing} and takes $\tilde{O}(mn^2)$ time.

In short, the state-of-the-art algorithms for both problems still require super-linear time. It was unclear if a speed-up is possible even when we allow approximation.

\subsection{Our Results}

In this paper, we present the first almost-linear-time algorithms for both problems, via directed expander hierarchies. 

For the $s$-rooted minimum cut problem, we give an algorithm that achieves with high probability an $O(\log^5(n))$-approximation.

\begin{restatable}{theorem}{cut}
\label{thm:cut}
Let $G = (V,E)$ be a directed graph with a given source vertex $s$ where the $s$-rooted connectivity is $k$.  
There is an algorithm that 
computes a cut $(S,V\setminus S)$ where $s\in S$ of size  $O(k \cdot \log^5(n))$. The running time is $m^{1+o(1)}$.
\end{restatable}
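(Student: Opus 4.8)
The plan is to reduce $s$-rooted minimum cut in a directed graph to a small number of expander-decomposition / expander-hierarchy computations, in the same spirit as the recent almost-linear-time max-flow machinery, but adapted to the $s$-rooted connectivity setting. The guiding principle is that an $s$-rooted cut of size $k$ is exactly a set $S \ni s$ with $|\partial^{\mathrm{out}}(S)| = k$, so the quantity we want to approximate is $\min_{S \ni s, S \ne V} |\partial^{\mathrm{out}}(S)|$. Expander hierarchies give us a way to certify, in almost-linear time, that every ``small boundary'' cut must in fact be ``large'' in volume, which lets us localize the search. Concretely, I expect the algorithm to (i) build a directed expander hierarchy of $G$ (a recursive expander decomposition, where each level is an expander with respect to the boundary edges of the level below), (ii) use the hierarchy to route a large $s$-rooted flow or to embed $\Omega(k / \log^{O(1)} n)$ many $s$-rooted arborescences with low congestion, and then (iii) extract the cut from the residual structure of this routing — the bottleneck of the routing, read through the $O(\log n)$ levels of the hierarchy each contributing an $O(\log n)$ or constant factor, is what produces the $O(k \log^5 n)$ bound.

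**The key steps, in order, would be as follows.** First, a preprocessing/sparsification step: replace $G$ by a subgraph (or sparsifier) with $\tilde O(k n)$ or $m^{1+o(1)}$ edges that preserves all $s$-rooted cuts of size $O(k)$ up to constant factors — this is needed so that the later expander machinery runs in $m^{1+o(1)}$ rather than depending on the original density; standard Nagamochi–Ibaraki-type or cut-sparsifier arguments should suffice, though one must be careful that directedness and the $s$-rooted requirement are respected. Second, guess (by doubling) the value of $k$, so that from now on $k$ is known up to a factor of $2$. Third, compute a directed expander hierarchy adapted to the source $s$: I would arrange the decomposition so that the part of the hierarchy containing $s$ behaves well, and use the expansion guarantee at each level to argue that any $s$-rooted cut $S$ with $|\partial^{\mathrm{out}}(S)| \le O(k)$ either has very small volume (so it is found by a local/brute-force search around $s$) or very large volume (so $V \setminus S$ is small and it is found by a symmetric local search from the ``sink side''). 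Fourth, combine these two local searches: a Dinic/push-relabel-style bounded-depth computation, but run only inside the low-diameter expander pieces, so that each local search costs $\tilde O((\text{volume}) \cdot k)$ which telescopes to $m^{1+o(1)}$ over the hierarchy. The approximation factor accumulates from: the sparsifier (constant), each of the $O(\log n)$ hierarchy levels (a $\log$ factor from the expander-decomposition boundary blowup), and the embedding/routing congestion (another $\mathrm{polylog}$), giving $O(\log^5 n)$ overall.

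**The main obstacle** I anticipate is the directed expander hierarchy itself and its interaction with the $s$-rooted condition. Undirected expander decompositions and hierarchies are by now standard, but directed expander decompositions are much more delicate: an edge-expander in the directed sense (every cut has many edges crossing in \emph{both} directions relative to volume) need not decompose cleanly, and the recursion that removes ``few'' boundary edges at each level has to be set up so that the total number of inter-cluster edges stays $m \cdot n^{o(1)}$ while \emph{every} level is genuinely expanding. Controlling this — and in particular ensuring the hierarchy can be computed in $m^{1+o(1)}$ time rather than $m^{1+o(1)} \cdot \mathrm{poly}(k)$ — is where the real work lies; I would expect the bulk of the paper's technical effort, and the source of the $o(1)$ in the running time, to be exactly in constructing and analyzing this directed hierarchy, with the cut-extraction step (Steps 3–4 above) being comparatively routine once the hierarchy is in hand. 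A secondary subtlety is matching lower and upper bounds on the cut: the hierarchy gives a \emph{certificate} that the $s$-rooted connectivity is $\ge k/n^{o(1)}$ (this is essentially the arborescence-packing theorem, \Cref{thm:cut}'s companion), while the explicit cut of size $O(k\log^5 n)$ must be produced from a \emph{witness} of infeasibility of routing slightly more than $k$ arborescences — reconciling the polylogarithmic gap between these two sides, and making sure the ``$k$'' in both is the same guessed value, requires care but no new ideas beyond the hierarchy.
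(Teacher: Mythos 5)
Your high-level skeleton --- build a directed expander hierarchy, exploit expansion to localize small cuts, accumulate a $\polylog$ factor per level over $O(\log n)$ levels, and concentrate the technical effort in the hierarchy itself --- matches the paper at that resolution, and you correctly identify \Cref{thm:ExpanderHierarchy} as the main investment. But the cut-extraction mechanism, which you call ``comparatively routine,'' is precisely where your proposal diverges from the paper, and the divergence matters for both correctness and the claimed approximation factor.

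You propose to extract the cut from a routing witness: embed $\Omega(k/\polylog n)$ $s$-rooted arborescences (or a large $s$-rooted flow) with low congestion, and read the cut off the residual bottleneck or the ``witness of infeasibility.'' The paper does \emph{not} do this for \Cref{thm:cut}. That mechanism is the content of the companion arborescence-packing result (\Cref{thm:pack}), and the authors keep the two results separate precisely because the routing/packing route only certifies $s$-rooted connectivity up to an $n^{o(1)}$ factor, not $O(\polylog n)$. Obtaining $O(\log^5 n)$ requires a different argument: fix an optimal $T$ with $\rho(T)=\lambda$; walk down the hierarchy defining $T_i = T\cap C_i$, where $C_i$ is the \emph{first} SCC of $[G\setminus E_{>i}]$ in topological order that meets $T$; show via \Cref{lemma:CutExpansion} that, as long as every intermediate $T_j$ occupies less than half the $E_j$-volume of its SCC, the extra boundary introduced by chopping $T$ down to $T_i$ is at most $L\lambda/\phi$ (one $1/\phi$ loss per level); stop at the highest level $i$ where $T_i$ occupies at least half the $E_i$-volume of $C_i$; then sample $\Theta(\log n)$ endpoints of $E_i[C_i]$ with probability proportional to $E_i$-degree, so that with high probability some sample $v$ lands in $T_i$, and a single max-flow from $E^-(C_i)$ to $v$ inside $C_i$ returns a cut of size at most $\rho(T_i)=O(L\lambda/\phi)=O(\lambda\log^5 n)$. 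One clause of your step~3 (``any $s$-rooted cut with small boundary either has very small or very large volume'') gestures at the right intuition that expansion forces small-boundary cuts to be volume-unbalanced, but you never say how the algorithm \emph{locates} the right piece; without the topological-order / random-endpoint-sampling / local-min-cut trio there is no algorithm, only an existence heuristic.

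Two smaller points. The paper needs neither a sparsification preprocessing step nor a doubling search over $k$: the hierarchy is built directly in $m^{1+o(1)}$ time, and then each level runs $\Theta(\log n)$ max-flow calls per SCC on vertex-disjoint subgraphs, for $m^{1+o(1)}$ total over $L=O(\log n)$ levels. There is no $\poly(k)$ dependence anywhere; your proposed local searches costing $\tilde{O}(\vol\cdot k)$ would not meet the stated running time when $k$ is large, and the doubling is unnecessary because Algorithm~1 never references $k$ at all.
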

\Cref{thm:cut} generalizes seamlessly to weighted graphs as well.

Before stating our next result, we first define the \emph{congestion} of  a set ${\cal T}=\{T_1,\dots,T_{k'}\}$ of $s$-rooted arborescences as the maximum number of times any edge may appear in $T_i \in {\cal T}$, i.e., $\max_{e \in E}|\{ T_i \in {\cal T} \mid e\in T_i\}|$. 
Our second algorithm achieves $n^{o(1)}$ congestion in almost-linear time.\footnote{We note that our proof of \Cref{thm:pack} (see \Cref{thm:ArboPackingDetailed} for a detailed version) assumes an almost-linear-time directed expander routing subroutine (see \Cref{conj:ExpanderRouting}), which is a commonly believed, but we are not aware of a formal proof in the existing literature. We will work on a full proof in the next version.}

\begin{restatable}{theorem}{pack}
\label{thm:pack}
Let $G = (V,E)$ be a directed graph with a given source vertex $s$ and a parameter $k>0$. There is an algorithm that computes either
\begin{itemize}
\item  a cut $(S,V\setminus S)$ where $s\in S$ of size less than $k$, or
\item $k$ arborescences rooted at $s$ with congestion $n^{o(1)}$, certifying that  $s$-rooted connectivity is at least $k/n^{o(1)}$.
\end{itemize}
The running time is $m^{1+o(1)}$.
\end{restatable}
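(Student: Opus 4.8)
The plan is to build the $k$ arborescences in a single top-down sweep over a directed expander hierarchy of $G$, using directed expander routing inside each cluster to extend partial arborescences, and to read off an $s$-rooted cut of size $<k$ from the first cluster where this local extension provably fails. First handle the trivial case: if $s$ does not reach all of $V$, output $S=\{v : v \text{ is reachable from } s\}$, for which $|\delta^+(S)|=0<k$; henceforth assume $s$ reaches everything. Then compute a directed expander hierarchy of $G$ with expansion $\phi=1/n^{o(1)}$ and depth $\eta=n^{o(1)}$ in $m^{1+o(1)}$ time. This produces contracted graphs $G=\mathcal{G}_0,\mathcal{G}_1,\dots,\mathcal{G}_\eta$, where $\mathcal{G}_{\ell+1}$ arises from $\mathcal{G}_\ell$ by a boundary-linked $\phi$-expander decomposition (partition $V(\mathcal{G}_\ell)$ into clusters, each a directed $\phi$-expander with its boundary linked into it, contract the clusters, and keep the inter-cluster edges as $E(\mathcal{G}_{\ell+1})$); the number of inter-cluster edges shrinks geometrically, so $\mathcal{G}_\eta$ is a single vertex. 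Write $s_\ell$ for the image of $s$ in $\mathcal{G}_\ell$.

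The sweep maintains the invariant that at level $\ell$ we hold $k$ spanning $s_\ell$-rooted arborescences of $\mathcal{G}_\ell$ with congestion $\rho_\ell$; the base case $\ell=\eta$ is trivial with $\rho_\eta=1$. To descend from $\ell+1$ to $\ell$, observe that each arborescence $T_j$ of $\mathcal{G}_{\ell+1}$, restricted to a cluster $C$ of $\mathcal{G}_\ell$, must span $C$ starting from the head of the unique inter-cluster edge by which $T_j$ enters $C$ (for the cluster $C_s$ containing $s_\ell$, starting from $s_\ell$ itself). Thus for each cluster $C$ we must solve a \emph{rooted arborescence extension} problem inside the $\phi$-expander $C$: given the multiset $R$ of $k$ entry vertices of $C$ (with multiplicities recording how many of the $T_j$ entered through each), produce $k$ spanning arborescences of $C$ with the prescribed roots and small congestion; splicing these into the level-$(\ell+1)$ arborescences yields the level-$\ell$ ones.

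For the local step I set up, inside each $\phi$-expander $C$, the flow problem of shipping $k$ units from $R$ to every vertex of $C$ (equivalently, that the entry multiset has $s$-rooted connectivity at least $k$ to all of $C$, with $C$'s out-boundary edges modeled so the condition reflects global rather than merely internal routability), and solve it with the directed expander routing primitive (\Cref{conj:ExpanderRouting}), which on a $\phi$-expander routes any degree-respecting demand with congestion $\mathrm{polylog}(n)/\phi$ in near-linear time. If it is infeasible, expander routing returns a sparse cut of $C$, and boundary-linkedness of the hierarchy guarantees this lifts, by restoring $C$'s boundary edges through the contractions, to a cut $(S,V\setminus S)$ of $G$ with $s\in S$ and $|\delta^+(S)|<k$, which we output and stop. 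Otherwise we decompose the resulting near-integral flow into $k$ arborescences of $C$ rooted at $R$, at the cost of a further $\mathrm{polylog}(n)$ factor, so $\rho_\ell \le (\mathrm{polylog}(n)/\phi)\cdot\rho_{\ell+1}$. After $\eta$ levels $\rho_0 \le (\mathrm{polylog}(n)/\phi)^{\eta} = n^{o(1)}$ for the hierarchy's choice of $\phi$ and $\eta$; the completed $k$ arborescences of $G$ then have congestion $n^{o(1)}$, and since every $s$-rooted cut is crossed by each arborescence while each edge lies in at most $n^{o(1)}$ of them, the cut has size at least $k/n^{o(1)}$, which is the claimed certificate. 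The running time is $m^{1+o(1)}$ for building the hierarchy plus a geometric sum of near-linear expander-routing and flow-decomposition work over the levels, i.e.\ $m^{1+o(1)}$ overall.

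\textbf{Main obstacle.} The heart of the argument is making the local step rigorous: one must show that ``$k$ units routable from the entry set to all of $C$'' fails \emph{exactly} (up to the $n^{o(1)}$ slack) when $G$ has an $s$-rooted cut of size $<k$ that meets $C$, so that a failure genuinely produces such a cut; this is where boundary-linkedness of the directed expander decomposition and the careful threading of $s$-rootedness through the contractions are essential, and where an approximate routine is tolerable since the final guarantee already absorbs $n^{o(1)}$ factors (one may even invoke \Cref{thm:cut} at intermediate steps). Equally delicate are converting a low-congestion flow inside an expander into a low-congestion \emph{arborescence} packing with only polylogarithmic loss, and checking that the per-level blow-up stays $n^{o(1)}$ after the $\eta$ compositions. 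Finally, the whole construction rests on an almost-linear-time directed expander routing primitive, which we assume (\Cref{conj:ExpanderRouting}) rather than prove here.
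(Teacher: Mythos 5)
Your approach is genuinely different from the paper's, and it has a gap at its crux step. The paper builds a bottom-up \emph{edge-based} hierarchy (\Cref{def:hier}): the level-$i$ edge set $E_i$ is $\phi$-expanding with respect to the SCCs of $G\setminus E_{>i}$, and nothing is ever contracted. Rather than carrying actual arborescences at intermediate levels, the paper maintains a \emph{coloring invariant} on edges and vertices ($\Gamma_i(e)$, $\Gamma_i(v)$) asserting only that each vertex can be reached from \emph{some} $\gamma$-colored vertex in its current SCC via $\gamma$-colored edges. Max-flow and expander routing are used to propagate this reachability invariant one level up, and only at the very top — where the remaining structure is effectively a DAG on SCCs — are the colorings converted into arborescences via DFS. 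Your proposal instead contracts clusters top-down and maintains \emph{spanning arborescences} of each contracted graph $\mathcal{G}_\ell$, then ``uncontracts'' by solving a rooted arborescence extension problem inside each $\phi$-expander cluster $C$.

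The gap is exactly this local step. You write that after expander routing produces a feasible (near-integral) flow shipping $k$ units from the multiset $R$ of entry vertices to every vertex of $C$, you ``decompose the resulting near-integral flow into $k$ arborescences of $C$ rooted at $R$, at the cost of a further $\mathrm{polylog}(n)$ factor.'' But a flow that delivers $k$ units to every vertex does not decompose into arborescences by any standard operation: that decomposition \emph{is} the arborescence packing problem, now posed on each cluster, and there is no almost-linear-time algorithm known for it, expander or not (the fastest cited results, Gabow and Bhalgat et al., have $\poly(k)$ dependence). Invoking it at every cluster at every level is circular unless you give an independent argument that arborescence packing inside a directed $\phi$-expander is easy, which you do not. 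This is precisely the obstacle the paper's coloring scheme is designed to avoid: the invariant requires only per-vertex, per-color reachability from a colored vertex within the SCC — a routing statement — rather than a tree-decomposition of a flow, and the tree structure is materialized once, at the end, on a DAG where it is trivial. Your acknowledgment of this difficulty in the ``Main obstacle'' paragraph names the problem but does not resolve it. Secondarily, your stated hierarchy depth $\eta = n^{o(1)}$ is much looser than what boundary-linked expander decompositions actually give ($O(\log m)$), and the lifting of a sparse cut of a cluster back to an $s$-rooted cut of $G$ through multiple levels of contraction needs care about edge direction (you need $|E^+(S)| < k$ with $s\in S$, not just a sparse cut of $C$), which boundary-linkedness alone does not automatically deliver in the directed setting; the paper sidesteps both by never contracting and by extracting its cut directly from the max-flow step (\Cref{lemma:PackingFlow}).
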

\Cref{thm:pack} strengthens \Cref{thm:cut} in the senses that it also returns a cerficate of the lower bound for $s$-rooted connectivity, but the approximation factor is $n^{o(1)}$ instead of $\polylog(n)$.




\paragraph{Concurrent Work} Independent of our work, Quanrud \cite{quanrud2025approximatingdirectedconnectivityalmostlinear} and Mosenzon \cite{mosenzon2025almost} have very recently obtained a $(1+\epsilon)$-approximation algorithm for $s$-rooted connectivity running in $m^{1+o(1)}/\epsilon$ time. This $(1+\epsilon)$ guarantee is stronger than our polylogarithmic approximation factor in \Cref{thm:cut}. It remains unclear whether the techniques in \cite{quanrud2025approximatingdirectedconnectivityalmostlinear,mosenzon2025almost} yield any analogous result for arborescence packing, as in \Cref{thm:pack}.

\section{Preliminaries}

Given a directed graph $G=(V,E)$, we denote $n=|V|$ and $m=|E|$. We use $[W]$ to abbreviate the set $\{1, \dots, W\}$. We use the notation $\tilde{O}$ to hide polylogarithmic factors in time complexity.  For any set $S$ and function $f:S \to \mathbb{R}$, unless otherwise stated, for any subset $T \subseteq S$, we define $f(T) = \sum_{t \in T} f(t)$.

\paragraph{Directed Graphs and Connectivity.} In a directed graph $G=(V,E)$, for any set $S \subseteq V$, call $(S,V-S)$ a \emph{cut}. We define $E^+(S)$ as the edges from $S$ to $V \setminus S$, and $E^-(S)$ as the edges from $V \setminus S$ to $S$. We define $\delta(S) = |E^+(S)|$ and $\rho(S) = |E^-(S)|$. When $S=\{s\}$ is a singleton, we abbreviate as $\delta(s) = |E^+(s)|$ and $\rho(s) = |E^-(s)|$. The degree of $s$ is $\deg(s) = \delta(s) + \rho(s)$ and the volume of a set $S$ is $\vol(S) = \deg(S) = \sum_{s \in S} \deg(s)$. We can similarly discuss degree (or volume) for a subgraph $G'$ or an edge subset $X \subseteq E$, which we denote by $\deg_{G'}$ or $\deg_X$, respectively.

The (global) \emph{min-cut} of a directed graph $G$ is defined as $\min_{\emptyset \neq S \subsetneq V} \rho(S)$. Given a fixed source $s \in V$, we can also define the \emph{$s$-rooted min-cut} problem, which asks $\min_{\emptyset \neq S \subseteq V \setminus s} \rho(S)$. Let $G^R$ be the graph obtained by inverting all edges in $G$. One can observe that, the \emph{min-cut} of $G$ can be computed by taking the minimum from both $s$-rooted min-cut in $G$ and $G^R$. Therefore, below we consider the \emph{$s$-rooted min-cut} problem, and we assume $E^-(s) = \emptyset$ by deleting the incoming edges to $s$, which does not influence the result.

\paragraph{Strongly Connected Components.} For a directed graph $G$, we use $[G]$ to denote the collection of strongly connected components (SCCs) in $G$, where each SCC $C$ is represented by a subset of vertices. Therefore, $[G]$ forms a partition of $V(G)$.

\paragraph{Max-Flows and Min-Cuts.} Given a directed graph $G=(V,E)$ and $s,t \in V$, the \emph{max-flow} from $s$ to $t$ is the maximum number of edge-disjoint paths from $s$ to $t$. By duality, this equals the size of \emph{min-cut} from $s$ to $t$, which is $\min_{t \in T \subseteq V-s} \rho(T)$. Here, for the minimizer $T$, we denote the cut $(V-T,T)$ as a \emph{min-cut} from $s$ to $t$. Given a graph $G$ and $s,t$, a max-flow or minimum $(s,t)$-cut can be computed in $m^{1+o(1)}$ time \cite{chen2022maximum}.



\subsection{Arborescences}

For a directed graph $G$ with source vertex $s$, an \emph{arborescence} rooted at $s$ is a spanning tree subgraph of $G$ in which every edge is directed away from $s$. Given an integer $k\geq 1$, a $k$-\emph{arborescence packing} with congestion $\gamma$ is a collection of $k$ arborescences such that each edge appears in at most $\gamma$ arborescences.

\subsection{Directed Expanders}

\paragraph{Demands.} For a directed graph $G$, a \emph{demand} $\mathbf{D}:V(G)\times V(G)\to \mathbb{R}_{\geq 0}$ assigns each (ordered) vertex pair a non-negative real number. The demand $\mathbf{D}$ is
\begin{itemize}
\item \emph{$\mathscr{C}$-component-constrained} for some partition $\mathscr{C}$ of $V(G)$, if for any two vertices $u,v$ belonging to different components $C_{u},C_{v}\in\mathscr{C}$, $\mathbf{D}(u,v) = 0$, and
\item \emph{$\mathbf{d}$-respecting} for some vertex weights $\mathbf{d}:V(G)\to \mathbb{R}_{\geq 0}$, if for each vertex $v$, $\sum_{u}(D(u,v) + D(v,u))\leq \mathbf{d}(v)$.
\end{itemize}

\paragraph{Routings.} For each pair of vertex $(u,v)$, let ${\cal P}_{u,v}$ denote the collection of simple paths from $u$ to $v$. Let ${\cal P} = \bigcup_{u,v\in V(G)}{\cal P}_{u,v}$ denote the collection of all simple paths. A \emph{routing} $\mathbf{R}: {\cal P}\to \mathbb{R}_{\geq 0}$ assigns each simple path a non-negative real number. The routing $\mathbf{R}$ has congestion $\gamma$ if for each edge $e\in E(G)$, $\sum_{P\ni e}\mathbf{R}(P)\leq \gamma$. A demand $\mathbf{D}$ can be routed in $G$ with congestion $\gamma$ if there exists a routing $\mathbf{R}$ with congestion $\gamma$ such that for each vertex pair $(u,v)$, $\sum_{P\in{\cal P}_{u,v}}\mathbf{R}(P) = \mathbf{D}(u,v)$.

\paragraph{Weak Flow Expansion.} For a directed graph $G$ with a terminal edge set $E^{\star}\subseteq E(G)$ and a cut edge set $B\subseteq E(G)$, we say $E^{\star}$ is $[G\setminus B]$-component-constrained $\phi$-expanding in $G$, if any $(\deg_{E^{\star}})$-respecting $[G\setminus B]$-component-constrained demand $\mathbf{D}$ can be routed in $G$ with congestion $1/\phi$.

\begin{lemma}[Flow Expansion Implies Cut Expansion]
\label{lemma:CutExpansion}
If $E^{\star}$ is $[G\setminus B]$-component-constrained $\phi$-expanding in $G$, then for each SCC $C\in[G\setminus B]$ and each 
cut $(V(G)\setminus T, T)$ such that $\deg_{E^{\star}}(C\cap T)\leq \deg_{E^{\star}}(C)/2$, we have
\[
\min\{\delta_{G}(T),\rho_{G}(T)\}\geq \phi\cdot\deg_{E^{\star}}(C\cap T).
\]
\end{lemma}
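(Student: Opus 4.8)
The plan is to derive the cut bound from the routing guarantee by exhibiting, for a given cut, an explicit feasible demand whose routing is forced to use many crossing edges. Fix an SCC $C\in[G\setminus B]$ and a cut $(V(G)\setminus T,T)$ with $\deg_{E^{\star}}(C\cap T)\leq\deg_{E^{\star}}(C)/2$. Write $A=C\cap T$ and $\bar{A}=C\setminus T$, so $A$ and $\bar{A}$ partition $C$ and $\deg_{E^{\star}}(A)\leq\deg_{E^{\star}}(\bar{A})$. If $\deg_{E^{\star}}(A)=0$ the claimed inequality reads $\min\{\delta_{G}(T),\rho_{G}(T)\}\geq 0$ and is vacuous, so I may assume $\deg_{E^{\star}}(A)>0$ (in particular $A\neq\emptyset$ and $\bar{A}\neq\emptyset$).

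First I would construct a demand $\mathbf{D}$ supported on ordered pairs in $A\times\bar{A}$ with $\sum_{w\in\bar{A}}\mathbf{D}(v,w)=\deg_{E^{\star}}(v)$ for every $v\in A$ and $\sum_{v\in A}\mathbf{D}(v,w)\leq\deg_{E^{\star}}(w)$ for every $w\in\bar{A}$. Such a $\mathbf{D}$ exists by a standard transportation-feasibility (Hall-type) argument: consider a bipartite instance with a source joined to each $v\in A$ by an arc of capacity $\deg_{E^{\star}}(v)$, each $w\in\bar{A}$ joined to a sink by an arc of capacity $\deg_{E^{\star}}(w)$, and an infinite-capacity complete bipartite graph between the two sides; since the total source capacity $\deg_{E^{\star}}(A)$ does not exceed the total sink capacity $\deg_{E^{\star}}(\bar{A})$, the maximum flow equals $\deg_{E^{\star}}(A)$ and its bipartite part yields $\mathbf{D}$. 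By construction $\mathbf{D}$ is $[G\setminus B]$-component-constrained, because all positive demand lies inside the single SCC $C$, and $(\deg_{E^{\star}})$-respecting, because each $v\in A$ sends exactly $\deg_{E^{\star}}(v)$ and receives nothing, each $w\in\bar{A}$ receives at most $\deg_{E^{\star}}(w)$ and sends nothing, and every other vertex is untouched. Hence, since $E^{\star}$ is $[G\setminus B]$-component-constrained $\phi$-expanding in $G$, there is a routing $\mathbf{R}$ of $\mathbf{D}$ with congestion $1/\phi$.

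Next I would read the cut bound off $\mathbf{R}$. The total weight it places on paths equals $\sum_{(u,v)}\mathbf{D}(u,v)=\deg_{E^{\star}}(A)$, and every path $P$ with $\mathbf{R}(P)>0$ runs from some $v\in A\subseteq T$ to some $w\in\bar{A}\subseteq V(G)\setminus T$, hence traverses at least one edge of $E^{+}(T)$. Summing the congestion bound over $E^{+}(T)$ gives
\[
\deg_{E^{\star}}(A)=\sum_{P}\mathbf{R}(P)\leq\sum_{e\in E^{+}(T)}\ \sum_{P\ni e}\mathbf{R}(P)\leq|E^{+}(T)|\cdot\frac{1}{\phi},
\]
that is, $\delta_{G}(T)\geq\phi\cdot\deg_{E^{\star}}(C\cap T)$. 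Running the mirror-image construction with a demand supported on $\bar{A}\times A$ — so that every carrying path now crosses from $V(G)\setminus T$ into $T$ and therefore uses an edge of $E^{-}(T)$ — yields $\rho_{G}(T)\geq\phi\cdot\deg_{E^{\star}}(C\cap T)$, and taking the minimum of the two bounds finishes the proof.

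The mathematical content here is light; the one place that needs care is the existence of the balanced demand $\mathbf{D}$, namely the transportation feasibility above, which is exactly where the hypothesis $\deg_{E^{\star}}(C\cap T)\leq\deg_{E^{\star}}(C)/2$ enters (it guarantees supply $\le$ sink capacity). The only other subtlety is disposing of the degenerate cases where some $E^{\star}$-degrees vanish or $A$ or $\bar{A}$ is empty, in which the asserted inequality is trivial, and double-checking the two structural conditions — $[G\setminus B]$-component-constrained and $(\deg_{E^{\star}})$-respecting — against the exact definitions, in particular that a vertex of $A$ contributes $0$ to every incoming-demand sum so the per-vertex inequality is met with equality on $A$ and with slack on $\bar{A}$.
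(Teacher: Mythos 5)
Your proof is correct, and it is the standard flow-to-cut argument one expects here (the paper states \Cref{lemma:CutExpansion} without proof, presumably for exactly this reason). You construct a $(\deg_{E^\star})$-respecting, $[G\setminus B]$-component-constrained demand from $A=C\cap T$ to $\bar{A}=C\setminus T$ that fully saturates the $E^\star$-degrees of $A$, appeal to the expansion hypothesis to route it with congestion $1/\phi$, observe that every carrying path crosses $E^{+}(T)$, and sum the congestion over $E^{+}(T)$; then mirror the construction for $E^{-}(T)$. All the pieces check out, including the transportation feasibility from $\deg_{E^\star}(A)\le\deg_{E^\star}(\bar{A})$ and the verification of the $(\deg_{E^\star})$-respecting condition against the paper's per-vertex in-plus-out bound.
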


\subsection{Weighted Graphs} One may also consider \emph{weighted graphs} $G=(V,E,\mathbf{c})$, with $\mathbf{c}: E \to [W]$, where every edge $e$ is with capacity ${\bf c}(e)$. In weighted graphs, instead of letting $\delta(S) = |E^+(S)|$ and $\rho(S) = |E^-(S)|$, we let $\delta(S) = \mathbf{c}(E^+(S))$ and $\rho(S) = \mathbf{c}(E^-(S))$, and the $s$-rooted min-cut is still defined as $\min_{\emptyset \neq S \subseteq V \setminus s} \rho(S)$, which is the minimum \emph{weight} of a cut.

In \Cref{sec:weak-ed}, we use the notion in weighted graphs to conform with \cite{fleischmann2025improved}. In \Cref{sec:apx-cut} and \Cref{sec:apx-pack}, we mainly state the results in unweighted graphs. However, all our algorithms will generalize seamlessly to weighted graphs when $W = \poly(n)$.


\section{Weak Expander Hierarchy}
\label{sec:weak-ed}

In this section, we recall tools related to directed expanders. We will exploit directed expander hierarchies to design our approximate min-cut and arborescence packing algorithms.

\begin{definition}[Expander Hierarchy]
\label{def:hier}
     An $L$-level expander hierarchy $\mathcal{H}$ of $G$ is a collection of edge sets $\{E_{i}\mid 1\leq i\leq L\}$ such that $\bigcup_{i}E_{i} = E(G)$. The hierarchy ${\cal H}$ is $\phi$-\textit{expanding} if, for every level $i \in \{1, \dots, L\}$, the level-$i$ edge set $E_i$ is $\left[G \setminus E_{>i}\right]$-component-constrained $\phi$-expanding in $G$, where $\left[G \setminus E_{>i}\right]$ refers to the set of SCCs in $G \setminus E_{>i}$.
\end{definition}

For intuition, we point out that all SCCs across levels in the hierarchy form a laminar family. Moreover, since we assume that the source vertex $s$ has no incoming edges, $s$ always appears as a singleton SCC in all levels.

\begin{theorem}[\cite{fleischmann2025improved},~Theorem 6.1]
\label{thm:DirectedED}

Given a directed, weighted graph $G = (V, E, \mathbf{c})$ with edge capacities $\mathbf{c}$ bounded by $W$, and a terminal edge set $E^{\star}\subseteq E$, there is a randomized algorithm that with high probability finds a set of cut edges $B\subseteq E$ such that
\begin{enumerate}
    \item $E^{\star}$ is $[G\setminus B]$-component-constrained $\phi$-expanding in $G$ for some $\phi = \Omega(1/(\log n\cdot\log^{3}(nW)))$, and
    \item the cut size satisfies $\mathbf{c}(B)\leq \mathbf{c}(E^{\star})/2$.
\end{enumerate}
The algorithm runs in $m^{1+o(1)}\log^3(nW)$ time.
\end{theorem}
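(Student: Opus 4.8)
The statement is a directed, capacitated expander decomposition for a boundary-linked, \emph{flow}-based notion of expansion; since it is quoted from \cite{fleischmann2025improved} I will sketch how I would prove it from scratch. The plan is a recursive ``cut-or-certify'' scheme. First I would scale away the capacities: partition $E$ into $O(\log W)$ classes by capacity scale, replace an edge of capacity $\approx 2^{j}$ by $\Theta(1)$ parallel unit edges (with rounding), and run the algorithm per scale; this costs only an $O(\log W)$ factor in both $\phi$ and the running time, already absorbed into the claimed bounds. So from now on $\mathbf{c}(\cdot)$ counts (multi-)edges and $\mathbf{d} := \deg_{E^{\star}}$ is an integral vertex weighting.

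\textbf{The core primitive.} The heart is a directed cut-matching game run on the weighting $\mathbf{d}$: given a strongly connected piece $H$ of the current graph, over $O(\log^{2} n)$ rounds a cut player proposes a partition of the $\mathbf{d}$-mass of $H$, and a flow player tries to route the induced matching demand by a single max-flow \emph{in $G$} (capacities scaled by $1/\phi'$). If every round's demand is routed, then composing the embedded expander's own low-congestion routing with these matchings certifies that every $\mathbf{d}$-respecting demand supported in $H$ routes in $G$ with congestion $O(\log^{2} n)/\phi'$; otherwise the first failing round yields, by LP duality, a directed cut $(V(H)\setminus T,T)$ with $\min\{\delta_{G}(T),\rho_{G}(T)\} < \phi'\cdot\min\{\mathbf{d}(T),\mathbf{d}(V(H)\setminus T)\}$, which a standard argument upgrades to a roughly most-balanced such cut. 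The point where directedness bites is that the matching demands must be routed in \emph{both} directions ($T\to\bar T$ and $\bar T\to T$), so that a cut which is sparse in only one of $\delta,\rho$ is still detected --- exactly the one-sidedness that \Cref{lemma:CutExpansion} tells us flow-expansion must forbid.

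\textbf{Recursion and budget.} When the primitive returns such a sparse cut, say with $\delta_{G}(T) < \phi'\min\{\mathbf{d}(T),\mathbf{d}(\bar T)\}$, I add only the forward crossing edges $E^{+}(T)$ to $B$ and recurse on the SCCs of the new graph: deleting one direction already refines the SCC structure (an SCC cannot straddle a one-directional cut), and it charges at most $\phi'\min\{\mathbf{d}(T),\mathbf{d}(\bar T)\}$ to the smaller side's terminal mass. Combining the most-balanced guarantee with the usual ``peel off the unbalanced sparse pieces, then what remains is an expander'' analysis (as in expander decomposition with trimming), the recursion has depth $O(\log(nW))$, the total charge telescopes to $\phi'\cdot O(\log(nW))\cdot\mathbf{c}(E^{\star})$, and choosing $\phi' = \Theta(1/\log(nW))$ gives $\mathbf{c}(B)\le\mathbf{c}(E^{\star})/2$; after the further poly-logarithmic losses from the cut-matching rounds and the capacity scaling one recovers the claimed $\phi = \Omega(1/(\log n\log^{3}(nW)))$. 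For the running time, each invocation of the primitive is $O(\log^{2}n)$ max-flows, each computable in $m^{1+o(1)}$ time by \cite{chen2022maximum}, and the recursion multiplies the number of such calls by only a poly-logarithmic factor, for $m^{1+o(1)}\log^{3}(nW)$ overall.

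\textbf{Main obstacle.} The principal difficulty is the directedness, in two intertwined forms. First, the cut-matching game must \emph{certify} a two-sided guarantee --- low-congestion routing implies lower bounds on both $\delta$ and $\rho$, by \Cref{lemma:CutExpansion} --- while it is only guaranteed to \emph{find} cuts that may be sparse on a single side, so the flow player's demands and the final embedding must be arranged so that the one-directional edges placed in $B$ do not invalidate the certificate. Second, the output routing must be valid in all of $G$ and \emph{simultaneously} for every component of $G\setminus B$ (the congestion bound $1/\phi$ is a single global quantity), so the per-component expander embeddings produced along the recursion must be assembled without their congestions stacking; this is handled by the boundary-linked bookkeeping that charges the cut edges incident to a component as extra terminal degree of that component. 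Everything past these points is the by-now-standard poly-logarithmic accounting, which I expect to be routine but lengthy.
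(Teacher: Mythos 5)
This theorem is imported from \cite{fleischmann2025improved}: the paper gives no proof of its own (the accompanying remark only says the cited argument ``also yields'' the restated version), so there is nothing in-paper to compare your sketch against. At the level of framework, your plan---a directed cut-matching game inside a recursive cut-or-certify decomposition, together with a boundary-degree charging scheme so that per-piece routings do not stack---is indeed the approach taken in that line of work, and your observation that removing only the one sparse direction $E^{+}(T)$ already refines the SCC structure is correct and is exactly why the recursion makes progress on the partition.

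However, several steps you flag as ``routine'' are actually where the technical content lives, and as written they are gaps rather than proofs. First, the capacity-scaling reduction does not work as stated: replacing a capacity-$2^{j}$ edge by $\Theta(1)$ unit edges discards the capacities, and running the algorithm once per scale and unioning the returned cut sets $B_{j}$ does not yield a single certificate for $\deg_{E^{\star}}$-respecting demands that mix scales, since each per-scale certificate only covers demands concentrated on that scale; the natural route is to keep capacities and let the max-flow/min-cut primitive handle them directly. Second, ``upgrading'' a returned sparse cut to a most-balanced one is one of the genuinely hard points in the directed setting (there is no clean directed analogue of undirected trimming/pruning), yet your sketch leans on it to get recursion depth $O(\log(nW))$. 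Third, the claim that all the per-component certificates compose into a single routing in $G$ with one global congestion bound $1/\phi$ is the crux of the boundary-linked argument and is not implied by the slogan ``charge cut edges as extra terminal degree''; you need an explicit invariant tracking how congestion from different levels of the recursion shares edges of $G$. Until those three points are filled in, the proposal is a plausible roadmap, not a proof, and the polylog losses you quote ($\log^{2}n$ from the game, $\log(nW)$ from recursion depth, $\log W$ from scaling) are reverse-engineered from the target rather than derived.
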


\begin{remark}
\Cref{thm:DirectedED} is not stated identically to Theorem 6.1 in~\cite{fleischmann2025improved}, but the same argument also yields \Cref{thm:DirectedED}.
\end{remark}


\begin{theorem}
\label{thm:ExpanderHierarchy}
    Given a directed, weighted graph $G=(V,E,\textbf{c})$, one can construct a hierarchy $\mathcal{H}$ with $L = O(\log(nW))$ and $\phi = \Omega(1/(\log(n)\log^{3}(nW)))$ in $m^{1+o(1)}\cdot \log^{4}(nW)$ time.
\end{theorem}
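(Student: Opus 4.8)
The plan is to build $\mathcal{H}$ by iterating \Cref{thm:DirectedED}, each time recursing on the cut set it returns. Set $F_{0} := E(G)$, and for $j \ge 1$ apply \Cref{thm:DirectedED} to $G$ with terminal edge set $E^{\star} = F_{j-1}$ to obtain a cut set $B$; put $F_{j} := B$. The theorem guarantees that $F_{j-1}$ is $[G\setminus F_{j}]$-component-constrained $\phi$-expanding in $G$ for $\phi = \Omega(1/(\log n\cdot\log^{3}(nW)))$, and that $\mathbf{c}(F_{j}) \le \mathbf{c}(F_{j-1})/2$. I iterate until the first $L$ with $F_{L} = \emptyset$, producing a chain $E(G) = F_{0} \supseteq F_{1} \supseteq \dots \supseteq F_{L} = \emptyset$; here nestedness is the one point requiring care, discussed at the end.

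For the bound on $L$: every nonempty edge set has capacity at least $1$ since $\mathbf{c}$ takes values in $[W]$, while $\mathbf{c}(E(G)) \le mW \le n^{2}W$. Iterating $\mathbf{c}(F_{j}) \le \mathbf{c}(F_{j-1})/2$ gives $\mathbf{c}(F_{j}) \le n^{2}W/2^{j}$, which drops below $1$ — forcing $F_{j} = \emptyset$ — once $j > \log_{2}(n^{2}W) = O(\log(nW))$. Each invocation of \Cref{thm:DirectedED} costs $m^{1+o(1)}\log^{3}(nW)$ time and succeeds with high probability, so over $O(\log(nW))$ rounds the total time is $m^{1+o(1)}\log^{4}(nW)$ and the construction succeeds with high probability by a union bound.

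Now define the levels by $E_{i} := F_{i-1}\setminus F_{i}$ for $1\le i\le L$. From nestedness the sets $E_{1},\dots,E_{L}$ are pairwise disjoint with $\bigcup_{i}E_{i} = F_{0}\setminus F_{L} = E(G)$, and, telescoping, $E_{>i} = \bigcup_{j>i}E_{j} = F_{i}$. It remains to verify the hierarchy condition: $E_{i}$ should be $[G\setminus E_{>i}] = [G\setminus F_{i}]$-component-constrained $\phi$-expanding in $G$. We know this for $F_{i-1}\supseteq E_{i}$, and passing to a subset of terminals only decreases $\deg_{E^{\star}}(\cdot)$ pointwise; hence any $(\deg_{E_{i}})$-respecting, $[G\setminus F_{i}]$-component-constrained demand is also $(\deg_{F_{i-1}})$-respecting and thus routable in $G$ with congestion $1/\phi$. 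This monotonicity of weak flow expansion under shrinking the terminal set is precisely what makes the telescoping definition valid, and it delivers the claimed $\phi$.

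The main obstacle is justifying nestedness $F_{j}\subseteq F_{j-1}$, i.e., that the cut $B$ output by \Cref{thm:DirectedED} lies inside the supplied terminal set $E^{\star}$. This is the natural behaviour of (boundary-linked) expander decomposition and is consistent with the progress bound $\mathbf{c}(B)\le\mathbf{c}(E^{\star})/2$, so I would either invoke it directly from the construction underlying \Cref{thm:DirectedED} in \cite{fleischmann2025improved}, or else replace $B$ by $B\cap E^{\star}$ and re-derive that $F_{j-1}$ stays $[G\setminus(B\cap E^{\star})]$-component-constrained $\phi$-expanding — the latter is not a black-box step, since removing fewer cut edges coarsens the SCC partition and thereby enlarges the family of admissible demands.
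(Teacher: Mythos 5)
Your proposal has a genuine gap, and you have correctly identified it yourself: nothing in the statement of \Cref{thm:DirectedED} guarantees $B\subseteq E^{\star}$, so the nestedness $F_{j}\subseteq F_{j-1}$ that your entire telescoping construction rests on ($\bigcup_i E_i = E(G)$, $E_{>i}=F_i$, and the disjointness of the $E_i$) is unjustified. The patch you sketch — replacing $B$ by $B\cap E^{\star}$ — does not work as a black box, for exactly the reason you flag: removing \emph{fewer} cut edges coarsens $[G\setminus(B\cap E^{\star})]$ and enlarges the family of admissible demands, so expansion is not inherited in that direction.

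The insight you are missing is that \Cref{def:hier} does \emph{not} require the level sets $E_i$ to be disjoint; it only asks that $\bigcup_i E_i = E(G)$. The paper therefore skips the disjointification step entirely: it sets $E_1 := E(G)$ and, having run \Cref{thm:DirectedED} with terminal set $E_i$, simply takes $E_{i+1}$ to be the returned cut set $B$, stopping when $B=\emptyset$. The union condition holds trivially because $E_1=E(G)$, and $L=O(\log(nW))$ follows from $\mathbf{c}(E_{i+1})\le\mathbf{c}(E_i)/2$ exactly as in your argument. For the expansion condition the paper uses the monotonicity in the \emph{opposite} direction from yours: \Cref{thm:DirectedED} gives that $E_i$ is $[G\setminus E_{i+1}]$-component-constrained $\phi$-expanding, and since $E_{>i}\supseteq E_{i+1}$ holds by definition of the union (no nestedness needed), $[G\setminus E_{>i}]$ is a refinement of $[G\setminus E_{i+1}]$, so the admissible demand set only shrinks and $E_i$ remains $[G\setminus E_{>i}]$-component-constrained $\phi$-expanding. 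Your monotonicity observation about shrinking the terminal set ($\deg_{E_i}\le\deg_{F_{i-1}}$ pointwise) is correct but becomes unnecessary once one abandons disjointness; and the bound $\mathbf{c}(E_1)\le nW$ versus your $n^2W$ is immaterial, both give $O(\log(nW))$. The rest of your proposal (the $L$ bound, the running-time tally, the high-probability union bound) matches the paper.
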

\begin{proof}
Initially, we set $E_{1} = E$. For each level $i=1,2,3,...$, we perform the following steps.
\begin{enumerate}
\item Run the directed expander decomposition \Cref{thm:DirectedED} on $G$ with terminal edge set $E_{i}$, and let $E_{i+1}$ be the returned cut edges.
\item If $E_{i+1}$ is empty, we let $L$ be the current $i$ representing the top level, and then exit the loop. Otherwise, proceed to the next iteration.
\end{enumerate}
The hierarchy $\H$ is exactly $\{E_{i}\mid 1\leq i\leq L\}$ computed above.

\medskip

\noindent\underline{Correctness.} 
Consider a level $i$. By \Cref{thm:DirectedED}, $E_{i}$ is $[G\setminus E_{i+1}]$-component-constrained $\phi$-expanding in $G$, for $\phi = \Omega(1/(\log(n)\cdot\log^{3}(nW))$. Because $E_{>i}\supseteq E_{i+1}$, we have that $E_{i}$ is $[G\setminus E_{>i}]$-component-constrained $\phi$-expanding in $G$ as desired. 

The hierarchy has $L = O(\log(nW))$ levels because $\mathbf{c}(E_{i+1})\leq \mathbf{c}(E_{i})/2$ for each $i$, and initially $\mathbf{c}(E_{1}) \leq nW$.



\medskip

\noindent\underline{Running Time.} We run \Cref{thm:DirectedED} for $L = O(\log(nW))$ levels, so the total running time is $m^{1+o(1)}\log^{4}(nW)$.

\end{proof}


\section{Approximate rooted min-cut}
\label{sec:apx-cut}

Given the above hierarchy $\H$ in \Cref{def:hier}, in this section we aim to obtain an $O(L/\phi)$ approximation for the rooted min-cut problem.

\cut*

We describe our algorithm as follows. First, we consider any level $i$ and any component $C$ (except the singleton $\{s\}$) in $[G \setminus E_{>i}]$. Then we consider all edges from $E_i$ inside $C$. We sample $O(\log n)$ times: in each iteration we first choose an edge $e$ uniformly at random from $E_i[C]$, and then choose a random endpoint $v$ of $e$. For any sampled vertex $v$, we compute the min-cut from $E^-(C)$ to $v$, which will correspond to the incoming edges $E^-(C_v)$ for some $C_v \subseteq C$. Finally, we output the minimum value among all computed $\rho(C_v)$. (When $i=0$, every component in $[G-E_{>0}]$ is a singleton $C=\{v\}$ and we still find $C_{v}=C$.) A pseudocode is given in \Cref{alg:apxcut}. 

\begin{algorithm}[ht]
    \caption{Approximate $s$-min-cut}
    \label{alg:apxcut}
    \begin{algorithmic}[1]
        \State $\H \gets$ $\phi$-expanding hierarchy of $G$
        \For{$i = 0$ \textbf{to} L, $C \in [G \setminus E_{>i}]$}
            \State Compute mincut $\rho(C_v)$ from $E^-(C)$ to random $v \gets \text{endpoints of }E_i[C]$
            \State (When $i=0$, $C=\{v\}$ is a singleton and we get $\rho(C_{v})=\rho(v)$.)
            \State Repeat $\Theta(\log n)$ times
        \EndFor
        \State \Return $\min(\rho(C_v))$ for all computed $C_v$
    \end{algorithmic}
\end{algorithm}

Now we prove that the above algorithm will output an $O(L/\phi)$ approximation for the rooted min-cut. Here, we use $\lambda$ to denote the rooted connectivity. From definition, below we suppose $\emptyset \neq T \subseteq V \setminus s$ is a minimizer that $\lambda = \rho(T)$. To prove \Cref{thm:cut}, we aim to find some $C_v$ that we compute in the algorithm satisfying $\rho(C_v) \leq (L/\phi)\lambda = (L/\phi) \rho(T)$. 

Start from $T$, our strategy is to construct one candidate set $T_i \subseteq T$ for each level $i$ that satisfies $\rho(T_i) \leq (L/\phi) \rho(T)$ under some conditions. And then, we show that a candidate set $T_i$ with $\rho(T_i) \leq (L/\phi) \rho(T)$ can be captured by some computed $C_v$ in the algorithm. We show the formal definition of $\{T_i\}$ as follows:

\begin{definition}
    We define $\{T_i\}$ inductively from $T_L = T$. For each $L > i \geq 0$, we consider an arbitrary topological order between the components in $[G-E_{>i}]$. Then we let $C_i$ be the first component in the topological order in $[G-E_{>i}]$ that intersects $T_{i+1}$, and define $T_i = T_{i+1} \cap C_i$.
\end{definition}

We note that $C_i \subseteq C_{i+1}$ for all $i$, since all components in $[G-E_{>i}]$ only further partition all components in $[G-E_{>{i+1}}]$ and thus $T_i = T \cap C_i$. Intuitively, $T_i$ is the intersection of $T$ and the ``first" component $C_i$ that intersects $T$. Next, the following theorem gives a condition when $\rho(T_i) \leq (L/\phi) \rho(T)$ holds. 

\begin{theorem}
    For any layer $i$ in $\mathcal{H}$, 
    suppose that for all $k>i$ we have $\vol_{E_k}(T_k) < |E_k[C_k]| / 2$. Then $\rho(T_i) = O(L\rho(T)/\phi)$.
\end{theorem}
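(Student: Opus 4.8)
The plan is to exploit the expander structure at each level and telescope across levels. Fix the level $i$ and assume, as hypothesized, that for every $k > i$ we have $\vol_{E_k}(T_k) < |E_k[C_k]|/2$. I would bound $\rho(T_i)$ by comparing it to $\rho(T)$ through the chain $T = T_L \supseteq T_{L-1} \supseteq \cdots \supseteq T_i$, controlling how much $\rho$ can grow at each step $T_{k+1} \rightsquigarrow T_k$. The key observation is that $T_k = T_{k+1}\cap C_k$, and since $C_k$ is the \emph{first} component (in the chosen topological order of $[G\setminus E_{>k}]$) that meets $T_{k+1}$, there are no edges of $G$ going from $T_{k+1}\setminus C_k$ into $C_k$ — any such edge would have to cross between SCCs of $G\setminus E_{>k}$ in the wrong topological direction, or lie in $E_{>k}$. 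So the incoming edges to $T_k$ decompose into (a) incoming edges to $T_{k+1}$ that land in $C_k$, and (b) edges of $E_{>k}$ entering $C_k\cap T_{k+1}$ from within $C_{k+1}$. Part (a) is at most $\rho(T_{k+1})$; the work is to bound part (b).

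For part (b), I would apply the cut-expansion consequence of flow expansion, \Cref{lemma:CutExpansion}, at the appropriate level. Since $E_k$ is $[G\setminus E_{>k}]$-component-constrained $\phi$-expanding in $G$, and the hypothesis gives $\vol_{E_k}(T_k) = \deg_{E_k}(C_k\cap T_{k+1}) < \deg_{E_k}(C_k)/2$ (using $T_k = T_{k+1}\cap C_k$ and that $E_k$ restricted to $C_k$ is exactly $E_k[C_k]$ up to the factor-of-two bookkeeping on degrees versus edge counts), \Cref{lemma:CutExpansion} yields
\[
\min\{\delta_G(T_k),\rho_G(T_k)\} \;\geq\; \phi\cdot\deg_{E_k}(C_k\cap T_{k+1}) \;=\; \phi\cdot\vol_{E_k}(T_k).
\]
Hence $\vol_{E_k}(T_k) \leq \rho(T_k)/\phi$. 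This says the $E_k$-volume of $T_k$, which upper-bounds the number of level-$\ge k+1$ \dots wait — I want the number of $E_{>k}$ edges into $T_k$, not $E_k$ edges; I would instead run the same argument one level up or observe directly that the $E_{>k}$ edges entering $C_k\cap T_{k+1}$ from inside $C_{k+1}$ are all counted in $\vol_{E_{>k}}(T_k)$, and then bound $\vol_{E_j}(T_j)\le \rho(T_j)/\phi$ for each $j$ and use monotonicity $T_i \subseteq T_j$ together with the hypothesis at level $j$ to transfer this. Summing the per-level growth over the at most $L$ levels between $i$ and $L$, each contributing an additive $O(\rho(T_j)/\phi) = O(\lambda/\phi)$ once we know $\rho(T_j)$ stays $O(L\rho(T)/\phi)$ along the way, gives $\rho(T_i) = O(L\rho(T)/\phi)$ by a straightforward induction from the top down.

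I expect the main obstacle to be the bookkeeping that identifies exactly which edge set carries the "new" incoming edges to $T_k$ versus $T_{k+1}$, and matching it to the level at which expansion is available — i.e., making precise that the edges newly cutting into $T_k$ but not into $T_{k+1}$ are bounded by a volume term $\vol_{E_{\ge k+1}}(\cdot)$ to which \Cref{lemma:CutExpansion} (applied at level $k$, whose cut edge set is $E_{>k}$) applies, and that the factor-of-$2$ slack in the hypothesis $\vol_{E_k}(T_k)<|E_k[C_k]|/2$ lines up with the $\deg_{E^\star}(C\cap T)\le \deg_{E^\star}(C)/2$ condition of the lemma (noting $\deg_{E_k[C_k]}$ counts each internal edge twice while $|E_k[C_k]|$ counts it once, so the hypothesis is exactly what is needed). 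Once that correspondence is nailed down, the telescoping sum and the induction on the level index are routine.
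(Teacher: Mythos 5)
Your overall strategy is sound — decompose the extra incoming edges of $T_i$ by the level $k>i$ of the hierarchy they live in, apply the cut-expansion consequence of flow expansion at each level, and sum. This matches the paper's plan. But there is a genuine error in the key application of \Cref{lemma:CutExpansion}, and it breaks the argument.

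You apply the lemma with the cut $(V\setminus T_k,\,T_k)$, obtaining $\rho(T_k)\geq \phi\cdot\vol_{E_k}(T_k)$, i.e.\ $\vol_{E_k}(T_k)\leq \rho(T_k)/\phi$. This is the wrong cut to feed in, and the resulting bound is circular: it bounds a piece of $T_k$'s boundary by $\rho(T_k)$ itself. When you then try to telescope $\rho(T_k)\leq \rho(T_{k+1})+B_k$ with $B_k$ charged against $\rho(T_j)/\phi$ terms, the recursion is multiplicative, not additive — each level multiplies the accumulated bound by roughly $(1+1/\phi)$, giving something like $\rho(T)\cdot(1+1/\phi)^L$ rather than $\rho(T)\cdot L/\phi$. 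Your closing sentence, ``each contributing an additive $O(\rho(T_j)/\phi)=O(\lambda/\phi)$ once we know $\rho(T_j)$ stays $O(L\rho(T)/\phi)$,'' is where this hides: if $\rho(T_j)=O(L\rho(T)/\phi)$, then $\rho(T_j)/\phi=O(L\rho(T)/\phi^2)$, not $O(\lambda/\phi)$, so the ``straightforward induction from the top down'' does not close.

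The fix is to apply \Cref{lemma:CutExpansion} with the \emph{original} minimizer cut $(V\setminus T,\,T)$, not $(V\setminus T_k,\,T_k)$. Since $C_k\cap T = T_k$, the hypothesis $\vol_{E_k}(T_k)<|E_k[C_k]|/2\leq \deg_{E_k}(C_k)/2$ satisfies the lemma's precondition, and the conclusion reads $\rho(T)\geq\phi\cdot\deg_{E_k}(T_k)$, i.e.\ $\vol_{E_k}(T_k)\leq \rho(T)/\phi=\lambda/\phi$ — a bound that is uniform across levels and independent of $T_k$. With that in hand no induction is needed: every edge of $E^-(T_i)\setminus E^-(T)$ lies in some $E_k$ with $k>i$ (by the topological-first-component argument you correctly identify) and has both endpoints in $C_k\cap T=T_k$, so the $E_k$-contribution is at most $\vol_{E_k}(T_k)\leq\lambda/\phi$; summing over the $\leq L$ levels gives $\rho(T_i)\leq\rho(T)+L\lambda/\phi=O(L\rho(T)/\phi)$ directly, which is exactly the paper's proof. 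Your telescoping variant can also be repaired to give the same bound (group the $B_k\cap E_j$ terms by $j$ and charge them to $\vol_{E_j}(T_j)\leq\lambda/\phi$), but only after replacing $\rho(T_k)$ by $\rho(T)$ in the expansion step.
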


\begin{proof}
    We first claim that, $E^-(T_i) \setminus E^-(T)$ contains only edges in $E_{>i}$. Let's suppose conversely that $E^-(T_i) \setminus E^-(T)$ contains some edge $e \in E_{\leq i}$. Since $E^-(T_i) \setminus E^-(T)$ only contains edges from $T-T_i$ to $T_i$, we suppose $e=(u,v)$ with $u \in T-T_i$ and $v \in T_i$. Now consider the component $C$ in $G-[E_{>i}]$ that intersects $T$ at $u$. Since $C$ has an edge to $C_i$, it must be precedent to $C_i$ in topological order, which is a contradiction by our definition of $C_i$. Thus, we can write as $E^-(T_i) \setminus E^-(T) = \cup_{k>i}E_k^-(T_i)$.

    Next we prove that, for each $k > i$, under the condition that $\vol_{E_k}(T_k) < \vol_{E_k}(C_k) / 2$, we have $|E_k^-(T_i)| \leq \rho(T)/\phi$. One can observe that, suppose $e \in E_k^-(T_i)$, then from the same reasoning as above, both endpoints of $e$ must be in $C_k$. Now from \Cref{def:hier}, we know the edges in $E_k[C_k]$ is $\phi$-expanding. Since $\vol_{E_k}(T_k) < \vol_{E_k}(C_k) / 2$, and $(V-T,T)$ is a cut through $E_k[C_k]$, one can obtain $|E_k^-(T_i)| \leq \rho(T)/\phi$ by \Cref{lemma:CutExpansion}. 
    
    Finally, by summing over all $L$ levels, we have $E^-(T_i) \setminus E^-(T) \leq L\rho(T)/\phi$ and thus $\rho(T_i) \leq (L/\phi + 1)\rho(T) = O(L\rho(T)/\phi)$.
\end{proof}


Now consider the case that $i$ is the largest index that satisfies $\vol_{E_i}(T_i) \geq |E_i[C_i]| / 2$, then $\rho(T_i) = O(L \rho(T)/ \phi)$. Now suppose in the algorithm we have entered the loop with level $i$ and $C=C_i$. When $i=0$, $C$ is a singleton and we simply return $\rho(C) = \rho(T_0) = O(L \rho(T)/\phi)$. 
Otherwise, we show that given $\vol_{E_k}(T_k) \geq \vol_{E_k}(C_k) / 2$, we can capture the set $T_i$ by sampling.

\begin{claim}
    When $C=C_i$, the algorithm samples some $v \in T_i$ with high probability, and $\rho(C_v) \leq \rho(T_i)  = O(L \rho(T)/\phi)$.
\end{claim}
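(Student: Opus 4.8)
The plan is to argue in two parts: first, that the sampling step in the loop with level $i$ and component $C=C_i$ hits the target set $T_i$ with high probability; second, that once such a $v\in T_i$ is sampled, the min-cut $\rho(C_v)$ computed from $E^-(C)$ to $v$ is no larger than $\rho(T_i)$. The second part is essentially a direct consequence of the definition of min-cut: since $i$ is the largest index with $\vol_{E_i}(T_i)\geq |E_i[C_i]|/2$, the previous theorem gives $\rho(T_i)=O(L\rho(T)/\phi)$; and because $s$ has no incoming edges, $T\subseteq V\setminus s$, so $T_i=T\cap C_i$ also avoids $s$. Crucially, $T_i\subseteq C_i=C$, so $(V\setminus T_i,T_i)$ is a valid $(E^-(C),v)$-cut as long as $v\in T_i$: every edge from $V\setminus T_i$ into $T_i$ is counted by $\rho(T_i)$, and since $E^-(C)\setminus E^-(T_i)$ consists of edges entering $C\setminus T_i$ inside $C$, removing $E^-(T_i)$ does disconnect $v$ from $E^-(C)$ inside the graph. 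Hence the min-cut value from $E^-(C)$ to $v$ is at most $\rho(T_i)$, giving $\rho(C_v)\leq\rho(T_i)=O(L\rho(T)/\phi)$.

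For the first part, the sampling picks a uniformly random edge $e\in E_i[C]$ and then a uniformly random endpoint of $e$. The probability that a single sample lands in $T_i$ is at least $\frac{1}{2}\cdot\frac{|E_i[C]\cap(\text{edges with an endpoint in }T_i)|}{|E_i[C]|}$, which I would lower bound using $\vol_{E_i}(T_i)\geq|E_i[C_i]|/2=|E_i[C]|/2$: the number of $E_i[C]$-edges incident to $T_i$ is at least $\vol_{E_i}(T_i)/1 \geq |E_i[C]|/2$ (each such edge contributes at most to the volume count appropriately — more carefully, $\vol_{E_i}(T_i)$ counts edge-endpoints in $T_i$, and the number of distinct edges touching $T_i$ is at least $\vol_{E_i}(T_i)/2$ if we are pessimistic, but actually at least $\vol_{E_i}(T_i)/2$ suffices). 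So a single sample lands in $T_i$ with probability $\Omega(1)$. Repeating $\Theta(\log n)$ times drives the failure probability below $n^{-c}$, and a union bound over the $O(nL)$ choices of $(i,C)$ keeps the overall failure probability polynomially small.

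The main obstacle I anticipate is the bookkeeping in the first part: relating $\vol_{E_i}(T_i)$, which counts edge-endpoints lying in $T_i$ (a degree-like quantity), to the number of edges of $E_i[C]$ that actually have a sampled endpoint in $T_i$. One has to be careful that $\vol_{E_i}$ is computed with respect to the terminal edge set $E_i$ and whether it counts both endpoints of an internal edge; the factor-of-two slack in $\vol_{E_i}(T_i)\geq|E_i[C]|/2$ should absorb this, but the argument needs the inequality that the number of $E_i[C]$-edges with at least one endpoint in $T_i$ is at least $\vol_{E_i}(T_i)/2$, which follows since each such edge contributes at most $2$ to $\vol_{E_i}(T_i)$. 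A secondary subtlety is confirming that the cut $(V\setminus T_i, T_i)$ genuinely separates $v$ from all of $E^-(C)$ in the flow instance as set up by the algorithm (e.g., that the algorithm adds a super-source over $E^-(C)$ correctly, or routes from the tails of $E^-(C)$), but this is routine once the flow instance is spelled out.
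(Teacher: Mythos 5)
Your proof follows the same two-step structure as the paper's: the degree-biased sampling (random edge in $E_i[C]$, random endpoint) hits a vertex of $T_i$ with probability $\Omega(1)$ by the assumed volume condition, so $\Theta(\log n)$ repeats succeed with high probability, and once some $v\in T_i$ is sampled, $(V\setminus T_i,T_i)$ is a feasible cut for the $E^-(C)$-to-$v$ min-cut instance (since $T_i\subseteq C$ avoids all tails of $E^-(C)$), giving $\rho(C_v)\leq\rho(T_i)$. The volume bookkeeping you worry about can be stated more cleanly by noting that a single sample lands on $v$ with probability exactly $\deg_{E_i[C]}(v)/(2|E_i[C]|)$, so the probability of landing in $T_i$ is $\vol_{E_i[C]}(T_i)/(2|E_i[C]|)\geq 1/2$; the factor-of-two slack you mention is not needed. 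One small slip: $E^-(C)\setminus E^-(T_i)$ consists of edges from outside $C$ into $C\setminus T_i$, not edges inside $C$, but this does not affect the argument.
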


\begin{proof}
    By our sampling method, we will sample each vertex $v$ with probability proportional to its degree in $E_i$. Now since $\vol_{E_k}(T_k) \geq \vol_{E_k}(C_k) / 2$, in every sample, we have at least $1/2$ probability to sample some $v \in T_i$. Therefore, the probability that there exists some $v \in T_i$ in all $\Theta(\log n)$ samples will be at least $1 - (1/2)^{\Theta(\log n)} = 1 - n^{-\Theta(1)}$. When some $v \in T_i$ is sampled, since $(V-T_i,T_i)$ is also a valid cut from $E^-(C)$ to $v$, we can obtain for the min-cut $C_v$ that $\rho(C_v) \leq \rho(T_i)$ and thus $\rho(C_v) = O(L \rho(T)/\phi)$.
\end{proof}

So we conclude that the algorithm will output some $\rho(C
_v) = O(L \rho(T)/\phi)$ with high probability, which gives us an $O(L/\phi)$-approximation. Finally, we show the time complexity of our algorithm and prove \Cref{thm:cut}.

\begin{proof}[Proof of \Cref{thm:cut}]
    From above we know the algorithm will output some $\rho(C_v) = O(L \rho(T)/\phi)$ with high probability. By inserting $L = O(\log n)$ and $\phi = \Omega(1/\log^4n)$, one can get $\rho(C_v) = O(\rho(T)\log^5n)$. For time complexity, the hierarchy $\H$ is built in $m^{1+o(1)}$ time in \Cref{thm:ExpanderHierarchy}. For each layer, all min-cut instances have $\tilde{O}(m)$ edges in total, so the total time of these instances is still $m^{1+o(1)}$. Therefore, the total time for computing $\H$ and resolving $L=O(\log n)$ layers is still $m^{1+o(1)}$.
\end{proof}

\section{Approximate Arborescence Packing}
\label{sec:apx-pack}

In this section, we will present our approximate arborescence packing algorithm, formally stated in \Cref{thm:ArboPackingDetailed}.

\subsection{Section Preliminaries}

\paragraph{Expander Routing.} In \Cref{sec:apx-pack}, we may consider \emph{integral demands}. An integral demand $\mathbf{D}$ is a multiset of vertex pairs. The notions of being $\mathscr{C}$-component-constrained and $\mathbf{d}$-respecting extend to an integral demand $\mathbf{D}$ by interpreting $\mathbf{D}$ as a function that maps each vertex pair to its multiplicity.

An \emph{integral routing} $\mathbf{R}$ of an integral demand $\mathbf{D}$ simply maps each $(u,v)\in \mathbf{D}$ into a $u$-$v$ routing path. $\mathbf{R}$ has congestion $\gamma$ if for each edge $e\in E(G)$, the number of routing paths going through $e$ is at most $\gamma$.

\begin{theorem}[Directed Expander Routing]
\label{conj:ExpanderRouting}
Let $G$ be a directed graph with two edge sets $E^{\star},X\subseteq E(G)$ such that $E^{\star}$ is $[G\setminus X]$-component-constrained $\phi$-expanding in $G$. Given an integral demand $\mathbf{D}$ on $V(G)$ that is $[G\setminus X]$-component-constrained and $(\alpha\cdot\deg_{E^{\star}}^{-})$-respecting for some $\alpha \geq 1$, there is an algorithm that computes an integral routing of $\mathbf{D}$ in $G$ with congestion $\kappa_{\route}\cdot\alpha/\phi$, for some
\[
\kappa_{\route} = n^{o(1)}.
\]
The algorithm runs in $(m+|\mathbf{D}|)\cdot n^{o(1)}$ time.
\end{theorem}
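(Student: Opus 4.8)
The plan is to split the claim into an existential part---a fractional routing handed to us for free by the weak-flow-expansion hypothesis---and its algorithmic strengthening to an \emph{integral} routing computable in $(m+|\mathbf{D}|)\cdot n^{o(1)}$ time, which I would obtain by running the standard recursive cut-matching--style expander-routing framework, adapted to directed, component-constrained demands.

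\textbf{Fractional routing.} Since $\deg_{E^{\star}}^{-}\le\deg_{E^{\star}}$ pointwise and weak flow expansion is scale-invariant, the given $(\alpha\cdot\deg_{E^{\star}}^{-})$-respecting, $[G\setminus X]$-component-constrained demand $\mathbf{D}$ routes \emph{fractionally} in $G$ with congestion $\alpha/\phi$. The parameter $\alpha$ will just scale linearly through the rest of the argument: the tokens introduced below start out $(\alpha\deg_{E^{\star}})$-spread and remain so, hence every auxiliary flow problem is $(\alpha\deg_{E^{\star}})$-respecting and routable with congestion $O(\alpha/\phi)$. So it suffices to turn the fractional routing into an integral one while losing only a $\kappa_{\route}=n^{o(1)}$ factor.

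\textbf{Recursive integralization.} The fractional routing does not immediately help because $\mathbf{D}$ is genuinely multi-commodity and cannot be rounded by a single max-flow; instead I would run the recursive expander-routing framework (cf.\ Ghaffari--Kuhn--Su, Chang--Saranurak) on all of $G$ at once, so that the $[G\setminus X]$-component-constrained structure is preserved throughout---no forwarding step and no final routing path ever leaves an SCC of $G\setminus X$. Place one token per demand pair at its source. The framework reduces routing a general demand to $O(\log n)$ rounds of \emph{single-commodity} token-forwarding along a hierarchical partition of each SCC, balanced by $\deg_{E^{\star}}$-volume so that tokens never pile up beyond $O(\deg_{E^{\star}})$; each round is realized by one directed (approximate) maximum flow on $G$, whose feasibility with congestion $O(1/\phi)$ is certified by the $\phi$-expansion of $E^{\star}$ together with \Cref{lemma:CutExpansion}, and concatenating the per-round segments yields the integral routing paths. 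Each max-flow costs $m^{1+o(1)}$ via \cite{chen2022maximum} and the token bookkeeping costs $|\mathbf{D}|\cdot n^{o(1)}$, giving the stated running time. A direct analysis would multiply a per-round blow-up over $\Theta(\log n)$ rounds; as in the undirected setting one keeps this at $n^{o(1)}$ by bootstrapping the routine $\Theta(\sqrt{\log n})$ times on sub-instances shrunk by a $2^{\Theta(\sqrt{\log n})}$ factor, so the accumulated overhead is $2^{O(\sqrt{\log n\log\log n})}=n^{o(1)}=\kappa_{\route}$.

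\textbf{Main obstacle.} The hard step---and the reason the paper defers a complete proof---is the directed realization of a single forwarding round. Undirected routing certifies a large routable forwarding cut via the cut-matching game or an expander decomposition on the two sides, but directed graphs lack an equally clean near-linear-time cut-matching primitive, so one must instead reinvoke the directed expander decomposition (\Cref{thm:DirectedED}) on the recursively generated sub-instances, all the while maintaining the $[G\setminus X]$-component-constrained structure, and then argue that the directed recursion closes with only an $n^{o(1)}$ total loss rather than the $(\polylog n)^{\Theta(\log n)}$ loss of the naive bound. Pushing this bootstrapping through in the directed case is precisely the gap acknowledged in the footnote to \Cref{thm:pack}.
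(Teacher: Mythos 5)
The paper does not prove \Cref{conj:ExpanderRouting}, so there is nothing for your proposal to be measured against: the authors state the theorem, point to the undirected expander-routing literature \cite{ghaffari2017distributed,ghaffari2018new,chang2024deterministic}, and explicitly defer a full proof to a future version (see the remark immediately after the theorem and the footnote to \Cref{thm:pack}). Your sketch is therefore in the same state of incompleteness as the paper itself, and you say as much in your final paragraph.

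That said, the roadmap you give is the natural one and matches what the citations suggest the authors intend. The fractional routing with congestion $\alpha/\phi$ does follow directly from the weak-flow-expansion definition together with the pointwise bound $\deg_{E^\star}^-\le\deg_{E^\star}$ and a scaling by $\alpha$; the recursive token-forwarding framework is the right integralization strategy; and bootstrapping over $\Theta(\sqrt{\log n})$ levels is the standard way to tame the per-round overhead to $n^{o(1)}$. You also correctly identify the gap: realizing a single forwarding round on a \emph{directed}, $[G\setminus X]$-component-constrained instance with an $m^{1+o(1)}$ primitive, and showing the recursion closes with only $n^{o(1)}$ accumulated loss. But naming the gap is not filling it. If you want a concrete next step, make the single-round claim precise: given a $\deg_{E^\star}$-balanced bipartition of an SCC of $G\setminus X$ and a set of tokens to forward, show one directed max-flow provably moves the required surplus with congestion $O(1/\phi)$, and then verify that the residual demand stays component-constrained and $\deg_{E^\star}$-respecting. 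Note that \Cref{lemma:CutExpansion} only bounds $\min\{\delta_G(T),\rho_G(T)\}$, which is one-sided, and the demand in the theorem is $\deg_{E^\star}^-$-respecting rather than $\deg_{E^\star}$-respecting; this directional asymmetry is exactly where the undirected balance argument does not transfer verbatim, and it is the concrete reason the authors write that they are ``not aware of a formal proof in the existing literature.''
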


\Cref{conj:ExpanderRouting} is an extension of  almost-linear-time expander routing algorithms in undirected graphs  \cite{ghaffari2017distributed,ghaffari2018new,chang2024deterministic}. We will work on a formal proof of this in the full version of the paper.

\subsection{The Algorithm}

\begin{theorem}
\label{thm:ArboPackingDetailed}
Let $G = (V,E)$ be a directed graph with a given source vertex $s$ and a parameter $k>0$. Assuming \Cref{conj:ExpanderRouting}, there is an algorithm that computes either
\begin{itemize}
\item a $k$-arborescence packing with congestion $\tilde{O}(\kappa_{\route}) = n^{o(1)}$, or
\item a cut $(S,V\setminus S)$ such that $s\in S\subset V(G)$ and $|E^{+}(S)|<k$, certifying that there is no $k$-arborescence packing in $G$.
\end{itemize}
The running time is $m^{1+o(1)}$.
\end{theorem}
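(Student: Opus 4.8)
The plan is to combine the expander hierarchy $\mathcal{H}$ from Theorem~\ref{thm:ExpanderHierarchy} with the directed expander routing primitive (Theorem~\ref{conj:ExpanderRouting}), building the $k$ arborescences level-by-level, from the bottom of the hierarchy up. The starting point is Edmonds' theorem: a $k$-arborescence packing exists iff $\rho(S)\ge k$ (equivalently $|E^+(S)|\ge k$ for the complement form used here) for every nonempty $S\subseteq V\setminus s$. So first I would run an $s$-rooted min-cut / max-flow check — but since we only have almost-linear time and need to detect whether the min-cut is below $k$, I would instead try to \emph{build} the packing and let the construction itself expose a violating cut when it fails. Concretely, recall the SCCs across all levels of $\mathcal{H}$ form a laminar family with $\{s\}$ a singleton at every level. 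I would process the laminar family of components from the innermost (level $0$, singletons) outward. The invariant to maintain after processing up through level $i$ is: for every component $C\in[G\setminus E_{>i}]$, we have routed $k$ units of ``arborescence flow'' that, restricted to $C$, form $k$ edge-bounded-congestion out-trees of $C$ rooted at a designated entry structure of $C$ — i.e., each of the $k$ arborescences-in-progress reaches every vertex of $C$, using only edges seen so far, with total congestion $n^{o(1)}$.

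The core inductive step is the ``merge'': given a component $C\in[G\setminus E_{>i}]$ which is the disjoint union of sub-components $C_1,\dots,C_t\in[G\setminus E_{>i-1}]$ together with the level-$i$ edges $E_i[C]$ connecting them, and given that each $C_j$ already carries $k$ internal out-trees, I need to stitch them into $k$ out-trees of all of $C$. For this I set up an integral demand $\mathbf{D}$ on $C$: for each $C_j$ other than the one containing $C$'s root side, and for each of the $k$ arborescence copies, I demand one unit from some vertex already reached inside the ``parent part'' to the root/entry vertex of $C_j$. The demand is $[G\setminus E_{>i-1}]$-component-constrained at the next finer scale — wait, more carefully, I make it $[G\setminus E_{>i}]$-component-constrained so that Theorem~\ref{conj:ExpanderRouting} applies with terminal set $E^\star=E_i$ and cut set $X=E_{>i}$, since $E_i$ is $[G\setminus E_{>i}]$-component-constrained $\phi$-expanding. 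To invoke routing I must ensure $\mathbf{D}$ is $(\alpha\cdot\deg^-_{E_i})$-respecting for a small $\alpha$; this is where the $|E_i[C]|/2$-type bookkeeping from Section~\ref{sec:apx-cut} re-enters — I balance the $k$ demand endpoints across $C$ proportionally to $\deg_{E_i}$, so $\alpha=O(k/\min\text{-relevant-degree})$, and the total congestion contributed by level $i$ is $\kappa_{\route}\cdot\alpha/\phi$. Summing over $L=O(\log n)$ levels and folding in the $1/\phi=\mathrm{polylog}(n)$ and $\kappa_{\route}=n^{o(1)}$ factors gives total congestion $\tilde O(\kappa_{\route})=n^{o(1)}$ as claimed. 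Running time is $L$ routing calls each of cost $(m+|\mathbf{D}|)n^{o(1)}$ with $|\mathbf{D}|=O(km)$ — here I'd cap $k\le m$ WLOG, or more delicately argue $\sum_i|\mathbf{D}_i|\le m\cdot n^{o(1)}$ since each level's demand size is bounded by $k$ times the number of sub-components merged, which telescopes — yielding $m^{1+o(1)}$ total.

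The failure mode gives the cut: if at some component $C$ and some level $i$ the required demand cannot be met because $C$ genuinely has too few level-$i$ edges entering some sub-part — detected because $\deg^-_{E_i}$ of that sub-part (or more precisely $\rho_G$ of the relevant sub-component, which we can measure) is smaller than $k$ — then that sub-component $S$ (or $V\setminus S$, taking complements appropriately so $s\in S$) is an $s$-rooted cut with $|E^+(S)|<k$, and I output it. By Edmonds' theorem this certifies no $k$-arborescence packing exists, matching the second alternative. I would implement the check by, before each routing call at component $C$, verifying $\rho_G(C_j)\ge k$ for every sub-component $C_j$ via the precomputed edge sets (this is just counting boundary edges, $\tilde O(m)$ total across the hierarchy); the smallest-level place this fails is where we emit the cut.

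The step I expect to be the main obstacle is the \emph{merge} itself — specifically guaranteeing that the $k$ out-trees produced inside each $C_j$ have their ``exposed'' roots/reachable-vertex-sets arranged so that the next-level demand is genuinely $[G\setminus E_{>i}]$-component-constrained and $O(1)\cdot$-or at worst $\mathrm{polylog}\cdot\deg^-_{E_i}$-respecting without blowing up congestion across many levels. Naively, each arborescence copy wants to use the same entry vertex of $C_j$, which would make that vertex a congestion bottleneck; the fix is to spread the $k$ copies over many entry points proportionally to $\deg_{E_i}$, but then one must argue the internal out-trees of $C_j$ (built at the previous level) can each be ``re-rooted'' or extended to the chosen entry point cheaply — this re-rooting is what needs care, and is the analogue of the ``which endpoint of the sampled edge'' argument in Section~\ref{sec:apx-cut}, now in a flow/packing rather than a cut setting. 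I would handle it by maintaining, as part of the invariant, that the $k$ internal out-trees of each component are \emph{$\deg_{E_i}$-spread at the boundary}, so that the demand set up at the parent level automatically has the right respecting bound; proving this invariant is preserved is the technical heart of the argument.
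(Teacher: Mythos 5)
Your high-level architecture matches the paper: build the hierarchy $\mathcal{H}$ (Theorem~\ref{thm:ExpanderHierarchy}), process SCCs bottom-up, and use Theorem~\ref{conj:ExpanderRouting} at each level with $E^\star=E_i$ and $X=E_{>i}$ to stitch the per-component objects together. You have also correctly put your finger on the crux: without something clever, the $k$ arborescence copies all want to enter a sub-component $C_j$ through few entry vertices, which makes the routing demand fail to be $O(\mathrm{polylog})\cdot\deg^-_{E_i}$-respecting and lets $\alpha$ grow as large as $\Theta(k)$. But you do not actually resolve this obstacle; you gesture at an invariant (``the $k$ internal out-trees are $\deg_{E_i}$-spread at the boundary'') without a mechanism to establish or maintain it, and that mechanism is exactly the piece the paper supplies.

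The paper's resolution is a \emph{max-flow preprocessing step inside each $C_i$} (Lemma~\ref{lemma:PackingFlow}) that precedes the expander routing. Rather than tracking $k$ explicit out-trees per component, the paper tracks a vertex coloring $\Gamma_i(v)$ whose size is controlled by Invariant~\ref{Invariant2}, $|\Gamma_i(v)|\le \Delta_i(v)\cdot(i+1)$, where $\Delta_i(v)$ counts $v$'s ``level-$i$ critical incoming edges.'' At level $i$, the colors that need re-rooting ($Z_v$) are sent via a single-commodity max flow in $G[C_i]$ from source $\Delta_i$ to sink $i\cdot\deg^-_{E_i}$; the flow paths carry each stranded color $\gamma$ from wherever it currently sits to a designated ``leader'' vertex $w_{C_i,\gamma}$ whose absorbed load is, by construction of the sink function, at most $i\cdot\deg^-_{E_i}(v)$. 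Only \emph{after} this redistribution does the routing call happen, and it uses a \emph{chain} demand $w_{C_i,\gamma}\to v_1\to v_2\to\cdots$ rather than a star, so each vertex appears in $O(i\cdot\deg^-_{E_i}(v))$ demand pairs (Lemma~\ref{lemma:RepsectingDemand}), giving $\alpha=3i=O(\log n)$ uniformly. This is the step your sketch is missing, and without it your $\alpha$ is unbounded. Relatedly, your cut-detection mechanism (checking $\rho_G(C_j)\ge k$ for each sub-component) is neither how the paper does it nor sufficient on its own: even if every $\rho_G(C_j)\ge k$, your demand may still fail to be respectingly routable with small $\alpha$. In the paper the small cut falls out of the max-flow instance itself: if the flow value is below $\min\{k,|\nabla_{C_i}|\}$, the corresponding min-cut is a set $C^\star$ with $|E^-(C^\star)|<k$, which is exactly the certificate needed. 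So the proposal captures the skeleton but lacks the two load-bearing devices — the redistributing max flow and the chain demand — and substitutes a cut check that does not guarantee the routing will go through.
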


To obtain a $k$-arborescence packing, we will actually assign each edge $e\in E(G)$ a subset of \emph{colors} $\Gamma(e)\subseteq [k]$ that satisfies the following.
\begin{enumerate}
\item\label{Prop:FinalColor1} For each color $\gamma\in [k]$, the source $s$ should be able to reach every other vertex using only \emph{$\gamma$-colored} edges (we say an edge is $\gamma$-colored iff $\gamma\in \Gamma(e)$). 
\item\label{Prop:FinalColor2} Each edge $e\in E(G)$ has at most $\tilde{O}(\kappa_{\route})$ colors, i.e. $|\Gamma(e)|\leq \tilde{O}(\kappa_{\route})$. 
\end{enumerate}
Observe that, once we have the above coloring, we can compute $k$ arborescences by for each color $\gamma\in[k]$, picking a DFS tree rooted at $s$ in the subgraph induced by $\gamma$-colored edges. The congestion of these arborescences is clearly bounded by $\tilde{O}(\kappa_{\route})$ because of Property \ref{Prop:FinalColor2}.

Therefore, the task reduces to either compute such a coloring $\{\Gamma(e)\mid e\in E(G)\}$, or output a small cut. To this end, we first construct an expander hierarchy ${\cal H}$ in the graph $G$ with $L$ levels and expansion $\phi$, where
\[
L = O(\log(nW)),\qquad \phi^{-1} = O(\log n\log^{3}(nW)),
\]
using \Cref{thm:ExpanderHierarchy}. With the hierarchy ${\cal H}$, our algorithm will proceed level by level from the bottom to the top.

\subsubsection{Invariants at Level $i$} After the computation at level $i$, our algorithm will keep a color assignment $\Gamma_{i}(e)$ for each edge $e\in E(G)$ and $\Gamma_{i}(v)$ for each vertex $v\in V(G)\setminus\{s\}$. 

The level-$i$ color assignments should satisfy certain invariants. Before describing the invariants, we start with some notations. For each vertex $v\in V(G)$, we use $C_{i,v}$ to denote the unique SCC in $[G\setminus E_{>i}]$ that contains $v$. Moreover, we let $K_{i}(v)$ denote the set of inter-SCC \underline{or} $E_{>i}$ incoming edges of $v$. That is,
\[
K_{i}(v):=\{e\in E(G)\mid e=(u,v)\text{~s.t.~}C_{i,u}\neq C_{i,v}\}\cup E_{>i}^{-}(v),
\]
and we call the edges in $K_{i}(v)$ the \emph{level-$i$ critical incoming edges} of $v$. Let 
\[
\Delta_{i}(v) = |K_{i}(v)|
\]
denote the size of $K_{i}(v)$.
Lastly, the following notions of \emph{$\gamma$-colored} vertices and edges are with respect to the color assignments $\{\Gamma_{i}(e)\}$ and $\{\Gamma_{i}(v)\}$ at level $i$.

The invariants are as follows.
\begin{enumerate}
\item\label{Invariant1} For each vertex $v\in V(G)\setminus \{s\}$ and each color $\gamma\in[k]$, $v$ can be reached from a $\gamma$-colored vertex in $C_{i,v}$ using only $\gamma$-colored edges.
\item\label{Invariant2} For each vertex $v\in V(G)\setminus \{s\}$,  $|\Gamma_{i}(v)|\leq \Delta_{i}(v)\cdot (i+1)$.
\item\label{Invariant3} For each edge $e\in E(G)$,  $|\Gamma_{i}(e)|\leq 5i^{2}\cdot(\kappa_{\route}/\phi)$.
\end{enumerate}
Intuitively, a vertex $v$ having color $\gamma\in\Gamma_{i}(v)$ means that $v$ acts as a ``breakpoint'' in the $\gamma$-colored arborescence, and we want to establish reachability from $s$ to $v$ via upper-level routing.

\paragraph{The Base Case.} We start with the bottom level $i=0$, and initialize level-$0$ color assignments that trivially satisfy the invariants: for each edge $e\in E(G)$, $\Gamma_{0}(e) = \emptyset$; for each vertex $v\in V(G)\setminus\{s\}$, $\Gamma_{0}(v) = [k]$. To see the feasibility of the level-$0$ color assignments, we first observe each SCC in $[G\setminus E_{>0}]$ just contains a single vertex since $E_{>0} = E(G)$. This means $\Delta_{0}(v) = |E^{-}_{G}(v)|$ is just the in-degree of $v$ in $G$.

Invariants \ref{Invariant1} and \ref{Invariant3} clearly hold. For Invariant \ref{Invariant2}, if it does not hold, it means there is a vertex $v$ with $\Delta_{0}(v)<k$, which implies a small cut $(V(G)\setminus \{v\},\{v\})$ and we just terminate the algorithm.

\paragraph{The Ending Case: Getting the Final Coloring $\{\Gamma(e)\}$.} Before we describe how to compute the color assignments level by level, we first show how to get the final edge-coloring $\Gamma(e)$ from the top-level color assignments $\{\Gamma_{L}(e)\}$ and $\{\Gamma_{L}(v)\}$. In fact, this is essentially the same as computing an arborescence packing on a DAG. 

We start by setting $\Gamma(e) = \Gamma_{L}(e)$ for each edge $e\in E(G)$. For each vertex $v\in V(G)\setminus\{s\}$, observe that $K_{L}(v)$ is exactly the inter-SCC incoming edges of $v$. We distribute the colors $\Gamma_{L}(v)$ over the edges in $K_{L}(v)$ such that each edge \emph{receives} at most $L+1$ colors. Here, an edge $e\in K_{L}(v)$ receives a subset of colors $Y'\subseteq \Gamma_{L}(v)$ means that we add $Y'$ into $\Gamma(e)$. This finishes the computation of the final edge-coloring $\{\Gamma(e)\}$.

Property \ref{Prop:FinalColor1} follows Invariant \ref{Invariant1} of the level-$L$ color assignments and the fact that SCCs in $G$ ``form'' a DAG. Each edge $e$ has
\[
|\Gamma(e)|\leq |\Gamma_{L}(e)| + L+1\leq O(L^{2}\cdot (\kappa_{\route}/\phi)).
\]

\subsubsection{The Algorithm at Level $i$} Consider a level $1\leq i\leq L$. We now describe the algorithm to compute the level-$i$ color assignments from the level-$(i-1)$ color assignments. Initially, we set $\Gamma_{i}(v) = \emptyset$ for each $v\in V(G)\setminus\{s\}$ and $\Gamma_{i}(e) = \Gamma_{i-1}(e)$ for each $e\in E(G)$.


\paragraph{Step 1: Allocate ``Breakpoint'' Colors.} For each SCC $C_{i}\in[G\setminus E_{>i}]$ and each vertex $v\in C_{i}$, recall that $C_{i-1,v}$ is the unique SCC in $[G\setminus E_{>i-1}]$ that contains $v$, and clearly $C_{i-1,v}\subseteq C_{i}$. We first partition the level-$(i-1)$ critical incoming edges of $v$, i.e. $K_{i-1}(v)$, into three sets $E_{X}, E_{Y}$ and $E_{Z}$, where
\begin{align*}
E_{X} &= K_{i}(v),\\
E_{Y} &= (E^{-}_{G}(C_{i-1,v})\setminus E^{-}_{G}(C_{i})\setminus E_{>i-1})\cap E^{-}_{G}(v),\\
E_{Z} &= E_{i}^{-}(v)\setminus (E_{X}\cup E_{Y}).
\end{align*}
We make some remarks on this partition. The first set $E_{X}$ is simply the level-$i$ critical incoming edges of $v$, which, by definition, consist of $v$'s incoming edges in $E_{>i}$ and $v$'s incoming edges whose tail-endpoints fall outside $C_{i}$. The second set $E_{Y}$ consists of $v$'s incoming edges whose tail-endpoints fall inside $C_{i}\setminus C_{i-1,v}$, but excludes those inside $E_{> i-1}$. By definition, $E_{X}$ and $E_{Y}$ are clearly disjoint. By the definition of $K_{i-1}(v)$, the remaining edges $K_{i-1}(v)\setminus (E_{X}\cup E_{Y})$ are all inside $E_{i}^{-}(v)$. Therefore, these three sets $E_{X},E_{Y},E_{Z}$ form a partition of $K_{i-1}(v)$.

Next, we arbitrarily partition $v$'s color $\Gamma_{i-1}(v)$ into three sets $X,Y,Z$, such that 
\[
|X|\leq |E_{X}|\cdot i = \Delta_{i}(v)\cdot i,\qquad |Y|\leq |E_{Y}|\cdot i,\qquad |Z|\leq |E_{Z}|\cdot i.
\]
\begin{itemize}
\item (Update $\Gamma_{i}(v)$) For the colors in $X$, we add them to $\Gamma_{i}(v)$. 
\item (Update $\Gamma_{i}$(e)) For the colors in $Y$, we distribute them over edges in $Y$ so that each edge in $Y$ receives at most $i$ colors from $Y$. 
\end{itemize}
The remaining colors in $Z$ will be handled in the following steps, and we use $Z_{v}$ refer to the $Z$ for this vertex $v$.


\paragraph{Step 2: Max Flows.} For each SCC $C_{i}\in[G\setminus E_{>i}]$, we will run a single-commodity max-flow instance in $G[C_{i}]$. The instance has source function $\Delta_{C_{i}} := (\Delta_{i})_{\mid C_{i}}$, i.e., the restriction of $\Delta_{i}$ on vertices $C_{i}$, and sink function $\nabla_{C_{i}}:= (i\cdot \deg^{-}_{E_{i}})_{\mid C_{i}}$, i.e., each vertex $v\in C_{i}$ has $\nabla_{C_{i}}(v)$ equal to $i$ times the number of $v$'s incoming edges in $E_{i}$. Note that, if we let 
\[
Z_{C_{i}} := \bigcup_{v\in C_{i}} Z_{v}
\]
be the union of the $Z$-colors of all $v\in C_{i}$, then the total sink $|\nabla_{C_{i}}| := \sum_{v\in C_{i}}\nabla_{C_{i}}(v)$ satisfies
\[
|\nabla_{C_{i}}|= i\cdot\sum_{v\in C_{i}}|E^{-}_{i}(v)| \geq \sum_{v\in C_{i}}|Z_{v}|\geq |Z_{C_{i}}|
\]
by the definition of $Z_{v}$

\begin{lemma}
\label{lemma:PackingFlow}
By calling an exact max-flow algorithm on $G[C_{i}]$ once, we can obtain either
\begin{itemize}
\item \textbf{(Cut Case)} a non-empty set $C^{\star}\subseteq C_{i}$ such that $|E^{-}(C^{\star})|<k$, or
\item \textbf{(Flow Case)} a collection of $|Z_{C_{i}}|$ edge-disjoint paths in $G[C_{i}]$, denoted by ${\cal P}_{C_{i}}$, such that: for each vertex $u\in C_{i}$, at most $\Delta_{C_{i}}(u)=\Delta_{i}(u)$ paths start at $u$; and for each vertex $v\in C_{i}$, at most $\nabla_{C_{i}}(v)$ paths end at $v$.
\end{itemize}
\end{lemma}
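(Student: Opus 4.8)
The plan is to realize the claimed dichotomy as a single $\sigma$-$\tau$ maximum-flow computation on $G[C_i]$ augmented with a super-source $\sigma$ and a super-sink $\tau$. Concretely, I would form the network $\widehat{G}$ on vertex set $C_i\cup\{\sigma,\tau\}$ obtained from $G[C_i]$ — with every edge given unit capacity — by adding an edge $(\sigma,v)$ of capacity $\Delta_i(v)=|K_i(v)|$ and an edge $(v,\tau)$ of capacity $\nabla_{C_i}(v)=i\cdot|E_i^-(v)|$ for every $v\in C_i$. Since $\widehat{G}$ has $O(|E(G[C_i])|+|C_i|)$ edges, computing an integral maximum flow from $\sigma$ to $\tau$ amounts to one max-flow call on (essentially) $G[C_i]$, and I then branch on whether its value is at least the threshold $|Z_{C_i}|$; this is a feasible target against the sink capacities, since $|\nabla_{C_i}|\ge|Z_{C_i}|$.

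\emph{Flow case.} If the value is at least $|Z_{C_i}|$, I would take an integral max-flow, decompose it into unit-flow simple $\sigma$-$\tau$ paths (discarding the leftover circulation), and keep any $|Z_{C_i}|$ of them; deleting the $\sigma$- and $\tau$-endpoints turns each kept path into a (possibly trivial) simple path in $G[C_i]$, and this is the collection ${\cal P}_{C_i}$. These paths are edge-disjoint in $G[C_i]$ because its edges have unit capacity and the flow is integral. For the endpoint bounds: the number of kept paths whose first $C_i$-vertex is $u$ is at most the flow along $(\sigma,u)$, hence at most $\Delta_i(u)=\Delta_{C_i}(u)$; symmetrically, the number ending at a vertex $v$ is at most the flow along $(v,\tau)$, hence at most $\nabla_{C_i}(v)$.

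\emph{Cut case.} If the value is below $|Z_{C_i}|$, then by max-flow/min-cut there is a $\sigma$-$\tau$ cut $(A,B)$ (with $\sigma\in A$) of value $<|Z_{C_i}|$; put $S:=A\cap C_i$ and $C^\star:=C_i\setminus S$. The cut value equals $\sum_{v\in C_i\setminus S}\Delta_i(v)+\delta_{G[C_i]}(S)+\sum_{v\in S}\nabla_{C_i}(v)$, so if $S=C_i$ it would be $|\nabla_{C_i}|\ge|Z_{C_i}|$, a contradiction; hence $C^\star$ is a non-empty subset of $C_i$. To bound $|E^-(C^\star)|$ in $G$, I classify each incoming edge of $C^\star$ by its tail: if the tail lies in $S\subseteq C_i$, the edge is one of the $\delta_{G[C_i]}(S)$ edges from $S$ to $C^\star$ inside $G[C_i]$; if the tail lies outside $C_i$, the edge is an inter-SCC incoming edge of its head $v\in C^\star$, so it belongs to $K_i(v)$ and is charged to $\Delta_i(v)$. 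Hence $|E^-(C^\star)|\le\delta_{G[C_i]}(S)+\sum_{v\in C_i\setminus S}\Delta_i(v)\le(\text{cut value})<|Z_{C_i}|$. Finally, $Z_v\subseteq[k]$ for every $v$, so $Z_{C_i}=\bigcup_{v\in C_i}Z_v\subseteq[k]$ and therefore $|Z_{C_i}|\le k$, which gives $|E^-(C^\star)|<k$.

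I do not foresee a genuine obstacle; once the network is set up the argument is routine. The two points that deserve a moment's care are the non-emptiness of $C^\star$ (the $S=C_i$ contradiction above) and the charging, in the cut case, of $C^\star$'s incoming edges whose tails lie outside $C_i$ — this is exactly where the definition of the level-$i$ critical incoming edges $K_i(v)$ enters — and keeping the comparison threshold at $|Z_{C_i}|$ (rather than $k$) until the final step is what makes both branches of the dichotomy close cleanly.
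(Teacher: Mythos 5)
Your construction is exactly the standard super-source/super-sink flow instance with source capacities $\Delta_i(\cdot)$ and sink capacities $\nabla_{C_i}(\cdot)$ that the paper has in mind (it declines to write out the details, calling the argument standard), and both your flow-case decomposition and your cut-case charging---including the non-emptiness of $C^\star$ and the chain $|E^-(C^\star)|\le\delta_{G[C_i]}(S)+\sum_{v\in C^\star}\Delta_i(v)\le\text{cut value}<|Z_{C_i}|\le k$---are correct. The only cosmetic difference is that you threshold directly at $|Z_{C_i}|$ where the paper thresholds at $\min\{k,|\nabla_{C_i}|\}\ge|Z_{C_i}|$; the two choices are interchangeable here.
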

We are not going to prove \Cref{lemma:PackingFlow} formally since it is standard. Roughly speaking, if the max-flow value is less than $\min\{k,|\nabla_{C_{i}}|\}$, then it falls into the cut case. Otherwise, it falls into the flow case by decomposing the max-flow of value at least $\min\{k,|\nabla_{C_{i}}|\}\geq |Z_{C_{i}}|$.

By \Cref{lemma:PackingFlow}, if we obtain a set $C^{\star}$, then $(V(G)\setminus C^{\star},C^{\star})$ is a desired small cut, and thus we can terminate the whole algorithm. Otherwise, we obtain the edge-disjoint paths ${\cal P}_{C_{i}}$. We assign each of them to each color in $Z_{C_{i}}$. 
\begin{itemize}
\item (Update $\Gamma_{i}(e)$) For each color $\gamma\in Z_{C_{i}}$, we assign the color $\gamma$ to each edge $e$ on the corresponding path $P_{C_{i},\gamma}$ (i.e. add $\gamma$ to $\Gamma_{i}(e)$ for each $e\in P_{C_{i},\gamma}$)
\item (Update $\Gamma_{i}(v)$) For the starting vertex of $P_{C_{i},\gamma}$, say $u$, we add the color $\gamma$ to $\Gamma_{i}(u)$.
\end{itemize}
Lastly, we refer to the ending vertex of $P_{C_{i},\gamma}$ as the \emph{$\gamma$-colored leader} of $C_{i}$, denoted by $w_{C_{i},\gamma}$.


\paragraph{Step 3: Expander Routing.} For each SCC $C_{i}\in[G\setminus E_{>i}]$ and each color $\gamma\in Z_{C_{i}}$, we refer to a vertex $v\in C_{i}$ with $\gamma\in Z_{v}$ as a \emph{$\gamma$-colored breakpoint}, and we let 
\[
Z^{-1}_{C_{i}}(\gamma) = \{v\in C_{i}\mid \gamma\in Z_{v}\}
\]
collect all $\gamma$-colored breakpoints of $C_{i}$. Then, we fix an arbitrary \emph{chain demand} $\mathbf{D}_{C_{i},\gamma}$ over the $\gamma$-colored leader $w_{C_{i},\gamma}$ and breakpoints $Z^{-1}_{C_{i}}(\gamma)$ of $C_{i}$. That is, pick an arbitrary order of vertices in $Z^{-1}_{C_{i}}(\gamma)$, say $v_{1},v_{2},...,v_{q}$ (where $q = |Z^{-1}_{C_{i}}(\gamma)|$), and then create $q$ demand pairs $(w_{C_{i},\gamma},v_{1})$, $(v_{1},v_{2})$, ..., $(v_{q-1},v_{q})$.

Let $\mathbf{D}_{i} = \sum_{C_{i},\gamma}\mathbf{D}_{C_{i},\gamma}$ be the total demand set. The following \Cref{lemma:RepsectingDemand} enables us to route $\mathbf{D}_{i}$ via expander routing.

\begin{lemma}
\label{lemma:RepsectingDemand}
The total demand $\mathbf{D}_{i}$ is $(3i\cdot \deg^{-}_{E_{i}})$-respecting and $[G\setminus E_{>i}]$-component-constrained.
\end{lemma}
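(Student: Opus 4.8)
The plan is to verify the two properties separately, with the respecting condition being the substantive one. First I would dispatch the component-constrained claim: every demand pair created in Step~3 is of the form $(w_{C_i,\gamma}, v_1)$ or $(v_j, v_{j+1})$ where all of $w_{C_i,\gamma}, v_1,\dots,v_q$ lie in the same SCC $C_i \in [G\setminus E_{>i}]$ — the leader $w_{C_i,\gamma}$ is the endpoint of a path $P_{C_i,\gamma}\subseteq G[C_i]$ from Step~2, and the breakpoints $Z^{-1}_{C_i}(\gamma)\subseteq C_i$ by definition. Hence every pair in $\mathbf{D}_i$ has both endpoints in a common component of $[G\setminus E_{>i}]$, so $\mathbf{D}_i$ is $[G\setminus E_{>i}]$-component-constrained.

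For the respecting bound, fix a vertex $v$; I need $\sum_{u}\bigl(\mathbf{D}_i(u,v) + \mathbf{D}_i(v,u)\bigr) \le 3i\cdot\deg^-_{E_i}(v)$. The idea is to bound, separately, how many times $v$ appears as the \emph{tail} of a demand pair and how many times as the \emph{head}, across all choices of $(C_i,\gamma)$ with $v\in C_i$. In a single chain demand $\mathbf{D}_{C_i,\gamma}$, the vertex $v$ appears at most once as a tail (only if $v\in\{w_{C_i,\gamma}\}\cup Z^{-1}_{C_i}(\gamma)$) and at most once as a head (only if $v\in Z^{-1}_{C_i}(\gamma)$). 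So the total contribution at $v$ is at most $|\{\gamma : v = w_{C_i,\gamma}\}| + 2\cdot|\{\gamma : v\in Z^{-1}_{C_i}(\gamma)\}| = |\{\gamma : v = w_{C_i,\gamma}\}| + 2\cdot|Z_v|$, where $C_i = C_{i,v}$ is the unique SCC containing $v$. Now $|Z_v|\le |E_Z|\cdot i \le |E_i^-(v)|\cdot i = i\cdot\deg^-_{E_i}(v)$ by the choice of $Z$ in Step~1 and the fact that $E_Z\subseteq E_i^-(v)$. For the leader term: $v$ is the $\gamma$-colored leader of $C_{i,v}$ exactly when $v$ is the ending vertex of the path $P_{C_{i,v},\gamma}\in{\cal P}_{C_i}$; by the flow case of \Cref{lemma:PackingFlow}, at most $\nabla_{C_i}(v) = i\cdot\deg^-_{E_i}(v)$ paths end at $v$, so $|\{\gamma : v = w_{C_{i,v},\gamma}\}|\le i\cdot\deg^-_{E_i}(v)$. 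Summing, the total is at most $i\cdot\deg^-_{E_i}(v) + 2i\cdot\deg^-_{E_i}(v) = 3i\cdot\deg^-_{E_i}(v)$, as claimed.

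The main obstacle — really the only place care is needed — is making sure the bookkeeping across different $(C_i,\gamma)$ pairs does not double count: one must use that each vertex $v$ lies in a unique SCC $C_{i,v}$ of $[G\setminus E_{>i}]$, so the only chain demands touching $v$ are those with $C_i = C_{i,v}$, and then the sum over $\gamma\in Z_{C_{i,v}}$ is controlled by the two counting bounds above ($|Z_v|$ for breakpoint roles, $\nabla_{C_i}(v)$ for the leader role). Everything else is immediate from the definitions in Steps~1–3 and from \Cref{lemma:PackingFlow}; no further estimates are required.
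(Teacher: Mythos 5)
Your proposal is correct and follows essentially the same approach as the paper: both argue the component-constrained part is immediate since each chain demand stays within a single SCC, and both bound the appearances of a fixed vertex $v$ by $\nabla_{C_i}(v) = i\cdot\deg^-_{E_i}(v)$ for leader roles (via \Cref{lemma:PackingFlow}) plus $2|Z_v| \leq 2i\cdot\deg^-_{E_i}(v)$ for breakpoint roles. The only difference is that you spell out the chain structure a bit more explicitly, but the counting is identical.
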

\begin{proof}
First, $\mathbf{D}_{i}$ is clearly $[G\setminus E_{>i}]$-component-constrained by definition. To see that $\mathbf{D}_{i}$ is $(3i\cdot \deg^{-}_{E_{i}})$-respecting, we will show that each vertex $v\in V(G)$ appears in at most $3i\cdot |E^{-}(v)|$ demand pairs. First, $v$ can serve as the $\gamma$-colored leaders of $C_{i,v}$ for at most $\nabla_{C_{i}}(v) = i\cdot |E^{-}(v)|$ colors by \Cref{lemma:PackingFlow}. This contributes at most $i\cdot |E^{-}_{i}(v)|$ appearances. Second, $v$ is a $\gamma$-colored breakpoint of $C_{i,v}$ for each color $\gamma\in Z_{v}$ (note that $|Z_{v}|\leq i\cdot |E^{-}_{i}(v)|$), and each breakpoint contributes two appearances.
\end{proof}

By \Cref{lemma:RepsectingDemand} and \Cref{conj:ExpanderRouting}, we can obtain a routing $\mathbf{R}_{i}$ of $\mathbf{D}_{i}$ in $G$ with
\[
\text{congestion~}(3i)\cdot\kappa_{\route}/\phi.
\]
Therefore, for each chain demand $\mathbf{D}_{C_{i},\gamma}$ and each demand pair $(w_{1},w_{2})\in \mathbf{D}_{C_{i},\gamma}$, we have an associated routing path $R$ in $G$ from $w_{1}$ to $w_{2}$. 
\begin{itemize}
\item (Update $\Gamma_{i}(e)$) For each edge $e$ on $R$, we add color $\gamma$ to $\Gamma_{i}(e)$.
\end{itemize}
This finishes the computation of $\{\Gamma_{i}(e)\mid e\in E(G)\}$ and $\{\Gamma_{i}(v)\mid v\in V(G)\setminus\{s\}\}$.


\paragraph{Proof of the Invariants.} We now show that the level-$i$ color assignments satisfy all the invariants.

\medskip

\noindent\underline{Invariant \ref{Invariant1}.} First of all, observe that SCCs $C_{i-1}\in [G\setminus E_{>i-1}]$ admit a \emph{topological order} $\pi_{i-1}$, such that for each edge $e=(u,v)\in E(G)\setminus E_{>i-1}$ connecting two different SCCs $C_{i-1,u}$ and $C_{i-1,v}$, the SCC $C_{i-1,u}$ precedes $C_{i-1,v}$ on $\pi_{i-1}$. Moreover, the SCC $C_{i-1,s}$ containing the source vertex $s$ is the first in $\pi_{i-1}$.

Now, consider a vertex $v\in V(G)\setminus\{s\}$ and a color $\gamma\in[k]$. To avoid clutter, let $C_{i}$ be the SCC $C_{i,v}\in[G\setminus E_{>i}]$ containing $v$. We want to show that $v$ can be reached from a $\gamma$-colored (w.r.t. $\Gamma_{i}(\cdot)$) vertex in $C_{i}$, using only $\gamma$-colored edges. By the invariants of the level-$(i-1)$ color assignments, we know that $v$ can be reached by a vertex $u\in C_{i-1,v}$ s.t. $\gamma\in \Gamma_{i-1}(u)$ using only $\gamma$-colored edges. Recall that in Step 1, we partition $\Gamma_{i-1}(u)$ into three sets $X_{u},Y_{u},Z_{u}$. 
\begin{itemize}
\item Suppose $\gamma\in X_{u}$. Then according to the update of $\Gamma_{i}(u)$ in Step 1, $\gamma$ is added to $\Gamma_{i}(u)$, and we are done.
\item Suppose $\gamma\in Z_{u}$. Following the routing paths of the chain demand $\mathbf{D}_{C_{i},\gamma}$, $u$ can be reached from the $\gamma$-colored leader $w_{C_{i},\gamma}$ of $C_{i}$ using only $\gamma$-colored edges. Moreover, following the flow path $P_{C_{i},\gamma}$, $w_{C_{i},\gamma}$ can be reached from the starting vertex of $P_{C_{i},\gamma}$ (which is a $\gamma$-colored vertex) using only $\gamma$-colored edges.
\item Suppose $\gamma\in Y_{u}$. In Step 1, $\gamma$ is assigned to an incoming edge $(v',u)$ of $u$ such that $v'\in C_{i}\setminus C_{i-1,v'}$, and this edge $(v',u)$ does not belong to $E_{>i-1}$. This time, we repeat the above argument for the vertex $v'$ and the color $\gamma$. Note that we make progress each time, as $C_{i-1,v'}$ precedes $C_{i-1,v}$ in the order $\pi_{i}$, so finally we will reach the above two cases which find a $\gamma$-colored vertex in $C_{i}$ that can reach $v$ using only $\gamma$-colored edges.
\end{itemize}

\medskip

\noindent\underline{Invariant \ref{Invariant2}.} Consider a vertex $v\in V(G)\setminus\{s\}$. Initially, $\Gamma_{i}(v)$ is empty. In Step 1, the size of $\Gamma_{i}(v)$ increases by at most $\Delta_{i}(v)\cdot i$. In Step 2, the size of $\Gamma_{i}(v)$ increases by at most $\Delta_{i}(v)$ because of \Cref{lemma:PackingFlow}. Hence $|\Gamma_{i}(v)|\leq \Delta_{i}(v)\cdot(i+1)$ at the end.

\medskip

\noindent\underline{Invariant \ref{Invariant3}.} Consider an edge $e\in E(G)$. Initially, $|\Gamma_{i}(e)| = |\Gamma_{i-1}(e)|\leq 5(i-1)^{2}\cdot\kappa_{\route}/\phi$.
In Steps 1, 2 and 3, $|\Gamma_{i}(e)|$ increases by at most $i$, $1$ and $(3i)\cdot\kappa_{\route}/\phi$ respectively. Therefore,
\[
|\Gamma_{i}(e)|\leq 5(i-1)^{2}\cdot\kappa_{\route}/\phi + i+1+(3i)\cdot\kappa_{\route}/\phi\leq 5i^{2}\cdot\kappa_{\route}/\phi.
\]

\subsubsection{Running Time Analysis}

The bottlenecks are the expander hierarchy algorithm (\Cref{thm:ExpanderHierarchy}), exact max-flow algorithm (\Cref{lemma:PackingFlow}), and the expander routing (\Cref{conj:ExpanderRouting}). All of them incur $n^{o(1)}$ overheads, which dominate other polylogarithmic overheads. Hence the total running time is $m^{1+o(1)}$.

\bibliographystyle{alpha}
\bibliography{ref}

@article{fleischmann2025improved,
  title={Improved directed expander decompositions},
  author={Fleischmann, Henry and Li, George Z and Li, Jason},
  journal={arXiv preprint arXiv:2507.09729},
  year={2025}
}

@article{mosenzon2025almost,
  title={Almost-Optimal Approximation Algorithms for Global Minimum Cut in Directed Graphs},
  author={Mosenzon, Ron},
  journal={arXiv preprint arXiv:2512.09080},
  year={2025}
}

@article{gabow1998packing,
  title={Packing algorithms for arborescences (and spanning trees) in capacitated graphs},
  author={Gabow, Harold N and Manu, KS},
  journal={Mathematical Programming},
  volume={82},
  number={1},
  pages={83--109},
  year={1998},
  publisher={Springer}
}

@inproceedings{bhalgat2008fast,
  title={Fast edge splitting and Edmonds' arborescence construction for unweighted graphs},
  author={Bhalgat, Anand and Hariharan, Ramesh and Kavitha, Telikepalli and Panigrahi, Debmalya},
  booktitle={Proceedings of the nineteenth annual ACM-SIAM symposium on Discrete algorithms},
  pages={455--464},
  year={2008}
}

@inproceedings{chen2022maximum,
  title={Maximum Flow and Minimum-Cost Flow in Almost-Linear Time},
  author={Chen, Li and Kyng, Rasmus and Liu, Yang P and Peng, Richard and Probst Gutenberg, Maximilian and Sachdeva, Sushant},
  booktitle={2022 IEEE 63rd Annual Symposium on Foundations of Computer Science (FOCS)},
  pages={612--623},
  year={2022},
  organization={IEEE}
}

@inproceedings{gabow1991matroid,
  title={A matroid approach to finding edge connectivity and packing arborescences},
  author={Gabow, Harold N},
  booktitle={Proceedings of the twenty-third annual ACM symposium on Theory of computing},
  pages={112--122},
  year={1991}
}

@inproceedings{cen2022minimum,
  title={Minimum cuts in directed graphs via partial sparsification},
  author={Cen, Ruoxu and Li, Jason and Nanongkai, Danupon and Panigrahi, Debmalya and Saranurak, Thatchaphol and Quanrud, Kent},
  booktitle={2021 IEEE 62nd Annual Symposium on Foundations of Computer Science (FOCS)},
  pages={1147--1158},
  year={2022},
  organization={IEEE}
}

@article{edmonds1973edge,
  title={Edge-disjoint branchings},
  author={Edmonds, Jack},
  journal={Combinatorial algorithms},
  pages={91--96},
  year={1973},
  publisher={Academic Press}
}

@article{hao1994faster,
  title={A faster algorithm for finding the minimum cut in a directed graph},
  author={Hao, JX and Orlin, James B},
  journal={Journal of Algorithms},
  volume={17},
  number={3},
  pages={424--446},
  year={1994},
  publisher={Elsevier}
}

@article{goldberg1988new,
  title={A new approach to the maximum-flow problem},
  author={Goldberg, Andrew V and Tarjan, Robert E},
  journal={Journal of the ACM (JACM)},
  volume={35},
  number={4},
  pages={921--940},
  year={1988},
  publisher={ACM New York, NY, USA}
}

@misc{quanrud2025approximatingdirectedconnectivityalmostlinear,
      title={Approximating Directed Connectivity in Almost-Linear Time}, 
      author={Kent Quanrud},
      year={2025},
      eprint={2512.00176},
      archivePrefix={arXiv},
      primaryClass={cs.DS},
      url={https://arxiv.org/abs/2512.00176}, 
}

@inproceedings{chang2024deterministic,
  title={Deterministic expander routing: Faster and more versatile},
  author={Chang, Yi-Jun and Huang, Shang-En and Su, Hsin-Hao},
  booktitle={Proceedings of the 43rd ACM Symposium on Principles of Distributed Computing},
  pages={194--204},
  year={2024}
}

@inproceedings{ghaffari2017distributed,
  title={Distributed MST and routing in almost mixing time},
  author={Ghaffari, Mohsen and Kuhn, Fabian and Su, Hsin-Hao},
  booktitle={Proceedings of the ACM Symposium on Principles of Distributed Computing},
  pages={131--140},
  year={2017}
}

@inproceedings{ghaffari2018new,
  title={New Distributed Algorithms in Almost Mixing Time via Transformations from Parallel Algorithms},
  author={Ghaffari, Mohsen and Li, Jason},
  booktitle={32nd International Symposium on Distributed Computing},
  year={2018}
}

\end{document}